\pgfplotsset{compat=newest}
\newenvironment{customlegend}[1][]{%
  \begingroup\csname pgfplots@init@cleared@structures\endcsname\pgfplotsset{#1}%
}{\csname pgfplots@createlegend\endcsname\endgroup}%
\def\addlegendimage{\csname pgfplots@addlegendimage\endcsname}
\newif\ifhidelinks@hidelinks
\newcommand{\hidelinks}{%
  \hidelinks@hidelinkstrue
  \let\hidelinks@ifHy@colorlinks@status\ifHy@colorlinks
  \let\hidelinks@ifHy@ocgcolorlinks@status\ifHy@ocgcolorlinks
  \let\hidelinks@ifHy@frenchlinks@status\ifHy@frenchlinks
  \let\hidelinks@Hy@colorlink\Hy@colorlink
  \let\hidelinks@Hy@endcolorlink\Hy@endcolorlink
  \let\hidelinks@@pdfborder\@pdfborder
  \let\hidelinks@@pdfborderstyle\@pdfborderstyle
  \hypersetup{hidelinks}%
}
\newcommand{\restorelinks}{%
  \ifhidelinks@hidelinks
    \hidelinks@hidelinksfalse
    \let\ifHy@colorlinks\hidelinks@ifHy@colorlinks@status
    \let\ifHy@ocgcolorlinks\hidelinks@ifHy@ocgcolorlinks@status
    \let\ifHy@frenchlinks\hidelinks@ifHy@frenchlinks@status
    \let\Hy@colorlink\hidelinks@Hy@colorlink
    \let\Hy@endcolorlink\hidelinks@Hy@endcolorlink
    \let\@pdfborder\hidelinks@@pdfborder
    \let\@pdfborderstyle\hidelinks@@pdfborderstyle
  \fi
}
\definecolor{plotcolor0}{RGB}{255,0,0}
\definecolor{plotcolor1}{RGB}{0,0,255}
\definecolor{plotcolor2}{RGB}{64,170,0}
\newcommand{\ie}{\emph{i.e.,}\xspace}
\newcommand{\eg}{\emph{e.g.,}\xspace}
\newcommand{\cf}{\emph{cf.}\xspace}
\newcommand{\etal}{et~al.\@}
\newcommand{\Tstart}{\alpha}
\newcommand{\Tword}{\beta}
\newcommand{\Tsel}{T_{\textit{sel}}}
\newcommand{\kmin}{\underline{k}}
\newcommand{\kmax}{\overline{k}}
\newcommand{\rand}[1][]{\texttt{rand}(#1)}
\newcommand{\Oh}[1]{\mathcal{O}\!\left( #1\right)}
\newcommand{\Ohsmall}[1]{\mathcal{O}(#1)}
\newcommand{\Ohsmash}[1]{\smash{\Oh{#1}}}
\newcommand{\Om}[1]{\Omega\left(#1\right)}
\newcommand{\Th}[1]{\Theta\!\left( #1\right)}
\newcommand{\prob}[1]{{\mathbf{P}}\left[#1\right]}
\newcommand{\expect}[1]{{\mathbf{E}}\left[#1\right]}
\newcommand{\ceil}[1]{\left\lceil #1\right\rceil}
\newcommand{\floor}[1]{\left\lfloor #1\right\rfloor}
\newcommand{\card}[1]{\left\vert{#1}\right\vert}
\newcommand{\set}[1]{\left\{ #1\right\}}
\newcommand{\real}{\mathbb{R}}
\newcommand{\rplus}{\mathbb{R}_+}
\newcommand{\nat}{\mathbb{N}}
\newenvironment{code}{\noindent\normalsize%
\begin{ntabbing}%
\hspace{2em}\=\hspace{2em}\=\hspace{2em}\=\hspace{2em}\=\hspace{2em}\=%
\hspace{2em}\=\hspace{2em}\=\hspace{2em}\=\hspace{2em}\=\hspace{2em}\=%
\kill}{\end{ntabbing}}%
\newcommand{\BR}[1]{\texttt{[}#1\texttt{]}}
\newcommand{\DeclareInit}[3]{#1 $\Def$ #3\ \rm : #2}
\newcommand{\Declare}[2]{#1\mbox{ \rm : }#2}
\newcommand{\Def}{:=}
\newcommand{\Do}       {\textbf{do\ }}
\newcommand{\Else}     {\textbf{else\ }}
\newcommand{\Foreach}      {\textbf{foreach\ }}
\newcommand{\Function} {\textbf{def\ }}
\newcommand{\Funct}[3]{\Function #1\Declare{{\rm (}{#2\rm )}}{#3}}
\newcommand{\If}       {\textbf{if\ }}
\newcommand{\Return}   {\textbf{return\ }}
\newcommand{\Then}     {\textbf{then\ }}
\newcommand{\While}    {\textbf{while\ }}
\newcommand{\Comment}[1]   {\`\textit{--- #1}}
\title{Communication-Efficient (Weighted) Reservoir Sampling from Fully Distributed Data Streams}
\author{Lorenz Hübschle-Schneider}{Karlsruhe Institute of Technology, Germany}{huebschle@kit.edu}{}{}
\author{Peter Sanders}{Karlsruhe Institute of Technology, Germany}{sanders@kit.edu}{}{}
\authorrunning{L. Hübschle-Schneider and P. Sanders}
\titlerunning{Communication-Efficient (Weighted) Reservoir Sampling from Fully Distributed \ldots}
\keywords{Sampling, Weighted sampling, Reservoir sampling, Mini-batch, Data stream, Distributed streaming, Communication efficiency}
\begin{document}

\maketitle

\begin{abstract}
  We consider communication-efficient weighted and unweighted (uniform) random sampling from distributed data
  streams presented as a sequence of mini-batches of items.  This is a natural
  model for distributed streaming computation, and our goal is to showcase its
  usefulness.  We present and analyze fully distributed, communication-efficient
  algorithms for both versions of the problem.  An experimental evaluation of
  weighted reservoir sampling on up to 256 nodes (5120 processors) shows good
  speedups, while theoretical analysis promises further scaling to much larger
  machines.
\end{abstract}

\section{Introduction}

Random sampling is a fundamental tool used by many algorithms (see, \eg Cormen
\etal~\cite{Cormen}).  In its \emph{uniform} or \emph{unweighted} form, each
item has the same probability of being picked, while \emph{weighted} sampling
associates a weight with each item, and items are picked with probability
proportional to their share of the total weight.  Here, we consider a setting
where the input is not known in advance, but rather arrives over time and
distributed over many machines (called \emph{nodes}).  We ask that the nodes
collaboratively maintain a sample \emph{without replacement} of all items seen
so far at any node.  This is motivated by the need to work with ever-larger data
sets that are too large to store, even if every machine were to store only a
part of it.  Applications include databases~\cite{olken1995survey}, search
engines, data mining, network monitoring, and large-scale web applications in
general~\cite{msft07distres}.  More fundamentally, it is an ingredient in many
other randomized algorithms, such as maintaining the set of \emph{heavy hitters}
or \emph{count tracking} (\eg~\cite{JSTW19wres}).

In designing such algorithms, we strive for \emph{communication efficiency},
that is, to minimize communication between the nodes.  This is motivated by the
design of real-world data centers and supercomputers, where communication is an
expensive resource and highly underprovisioned.  For a more detailed motivation
of communication efficiency, refer to Sanders \etal~\cite{SSM13,HubSan16topk}. %

To achieve communication efficiency, an algorithm has to be fully distributed.
Therefore, models of computation that assign some node a special
\emph{coordinator} role may introduce unnecessary bottlenecks.  Most
prominently, the \emph{distributed streaming model}, also known as
\emph{continuous monitoring model (CMM)} \cite{cormode2013monitoring} relies on
such a centralized coordinator.  The goal is for the coordinator to continuously
maintain the result of the computation with as few (fixed-size) messages
exchanged between it and the other nodes (called \emph{sites}).  The sites
cannot communicate with one another directly, only with (or via) the
coordinator.  In addition, the \emph{CMM} assumes per-item synchronization of
the nodes and synchronous operation of the network, which would be prohibitively
expensive in the real world, if not outright impossible.  As a result, the
scalability of algorithms in the \emph{CMM} is inherently limited by the load on
the coordinator.

\subsection{Mini-Batch Model}\label{model}

Instead of processing each item individually, we process items as a series of
\emph{mini-batches} which arrive on small time intervals and in a distributed
fashion: some of their items arrive at each processing element (PE).  They
might, for example, be delivered over the network or be read in blocks from a
file system.  Apache Spark Streaming defines them as the set of all items that
arrived within a certain time window since the previous batch finished~\cite{sparkstream2013}.  Because memory is limited, only the items of the
current mini-batch are available in memory at each point in time.
This is a natural generalization of multiple other models of streaming
algorithms.  On a sequential machine with batches of size 1, we obtain the
sequential streaming model (see, \eg \cite{TT19parstream}).  In a distributed
model with $p$ sites (nodes) which exchange fixed-size messages with a
coordinator, batches of a single item per site yield the continuous monitoring
model described above.  In this paper, we use a distributed message-passing
model (\cf \cref{prel}).  We use the terms \emph{batch} and \emph{mini-batch}
interchangeably.

\subsection{Problem Definition}\label{intro:def}

Let $k$ be the desired sample size, and let the input data set consist of $n$
items, which we shall refer to by their indices from $1..n$\footnote{$a..b$ is
  shorthand for $\set{a,\ldots,b}$ throughout this paper.} purely for notational
convenience.  For weighted sampling, additionally associate with each item $i$ a
\emph{weight} $w_i \in \rplus$, and let $W \Def \sum_{i=1}^{n}{w_i}$ denote the
total weight.  The items are processed in batches of variable size.  After
processing a batch, update the sample to be a uniform (or weighted) random
sample without replacement of size $\min(k, n')$ of all $n'$ items seen so far,
up to and including the current batch.

A \emph{uniform} or \emph{weighted random sample without replacement} of size
$k$ consists of $k$ pairwise different items $s_1 \neq \ldots \neq s_k$ such
that for any sample $s_j$ and item $i \notin \{s_\ell\mid \ell<j\}$, we have
$\prob{s_j = i}=1/(n-j+1)$ in the uniform case or
$\prob{s_j=i}=w_i/(W-\sum_{\ell<j}{w_{s_\ell}})$ in the weighted setting.

Note that a second definition of the weighted problem also exists, where the
probability of each item to be \emph{included in} the sample is proportional to
its relative weight.  In that definition, \emph{infeasible items} with relative
weight exceeding $1/k$ have inclusion probability greater than $1$ and require
special treatment (see, \eg Efraimidis~\cite[Example 2]{efraimidis2015stream}).
 We henceforth refer to that definition as \emph{weighted sampling with
   probabilities}.
The definition we use does not have the problem of infeasible items.

\section{Related Work}\label{rel}

For an overview of the broader literature concerning random sampling, we refer
the reader to Sanders \etal~\cite{SandersLHSD18} for the uniform case and
our recent paper on weighted sampling~\cite{HubSan19wrs} for the weighted setting.  Here, we
limit ourselves to the literature on reservoir sampling.

\subparagraph*{Uniform Reservoir Sampling.}\label{rel:unw}
Sampling from a \emph{stream of data} has been studied
since at least the early 1960s \cite{fan62res} and several asymptotically
optimal algorithms are known, see \eg Vitter~\cite{Vit85} and
Li~\cite{li1994reservoir}.  The key insight of these algorithms is that it is
possible to determine in constant time how many items to skip before a new item
enters the reservoir by computing a random deviate from a suitably parameterized
geometric distribution \cite{devroye1986,li1994reservoir}.

Sampling from the union
of multiple data streams was only studied much more recently
\cite{chung2016simple,cormode2012continuous,woodruff2011optimal,cormode2010optimal}.
However, these publications use the continuous monitoring model (see
Introduction), and are thus inherently limited in their scalability.
Recently, Tangwongsan and Tirthapura~\cite{TT19parstream} presented a shared-memory
parallel uniform reservoir sampling algorithm in a mini-batch model.
Birler~\cite{birler19res} proposes an approach using
synchronized access to a shared source of skip values.

\subparagraph*{Weighted Reservoir Sampling.}\label{rel:weighted}

Sampling from a stream of \emph{weighted} items has received significantly less
attention in the literature.  Chao \cite{chao1982reservoir} presents a simple
and elegant algorithm for weighted reservoir sampling \emph{with probabilities}
(see \cref{intro:def}).  Efraimidis and Spirakis give an algorithm based on
associating a suitably computed key with each item, so that the $k$ items with
the largest keys form the desired sample, and also show how to compute skip
distances (which they call \emph{exponential jumps}) in constant time
\cite{efraimidis2006reservoir,efraimidis2015stream}.  Braverman
\etal~\cite{braverman2015float} present an approach they call \emph{Cascade
  Sampling} that is not affected by the numerical inaccuracies of floating-point
representation in computers by giving an exact reduction to sampling \emph{with}
replacement.

The first -- and, to the best of our knowledge, only -- distributed streaming
algorithm for weighted reservoir sampling was published only recently by Jayaram
\etal~\cite{JSTW19wres}.  Their algorithm is given in the continuous monitoring
model, resulting in complex algorithmic challenges and requiring a \emph{level
  set} construction.

\section{Preliminaries}\label{prel}

Unless explicitly specified, we shall make no assumptions about the distribution
of mini-batch sizes across PEs or over time, nor about the distribution of
items.  In algorithm analysis, we denote by $b$ the maximum number of items in
the current mini-batch at any PE, and by $B$ the sum of all PEs' current
mini-batch sizes.  Thus, an algorithm expressed in the mini-batch model can
handle arbitrarily imbalanced inputs without any impact on correctness; however,
load balance may suffer if the number of items per PE differs widely.

\subparagraph*{Machine Model.} Consider $p$ processing elements (PEs) numbered
$1..p$ connected by a network so that each PE can send and receive at most one
message simultaneously to any other PE (full-duplex, single-ported
communication), which matches the machine model of Dietzfelbinger
\etal~\cite{MehSanPar}.  Sending a message of length $\ell$ takes time
$\Tstart + \Tword \ell$, where $\Tstart$ is the time to initiate the transfer
and $\Tword$ the time to send a single machine word once the connection has been
established.  By treating $\Tstart$ and $\Tword$ as variables in asymptotic
analysis, we can concisely combine \emph{internal work} $x$, \emph{communication
  volume} $y$ and \emph{latency} $z$ into a single asymptotic term for the time
of the critical path: $\Oh{x + \Tword y + \Tstart z}$.

\subparagraph*{Collective Communication.}
We use several fundamental communication operations that involve all PEs.  A
\emph{broadcast} distributes a message from one PE to all others.  A
\emph{reduction} applies an associative operation to the values of one or
several variables at the PEs, the result of which is available at a single
designated PE.  In an \emph{all-reduction}, the result of the reduction is made
available at every PE. %
All of these operations can be performed in
time $\Oh{\Tword \ell + \Tstart \log p}$ for a message of~$\ell$ machine words
\cite{BalEtAl95,SST09}. A \emph{gather} operation communicates one or more
values from each PE to a designated root PE.  This can be done in
$\Oh{\Tword p \ell + \Tstart \log p}$ time \cite[Chapter 13.5]{MehSanPar}, where
$\ell$ is the number of machine words sent by each PE.

\subsection{Sampling by Sorting Random Variates}\label{prel:exp}

It is well known that an unweighted sample without replacement of size $k$ out
of $n$ items $1..n$ can be obtained by associating with each item a uniform
variate, and selecting the $k$ items with the smallest associated variates (see,
\eg \cite{fan62res} for an early algorithm based on this idea).  This method can
be generalized to computing a \emph{weighted} sample without replacement by
raising uniform variates to the power of the inverse of the items' weights --
$v_i \Def \rand^{1/w_i}$, where $\rand$ generates a uniformly random deviate
from the interval $(0,1]$ -- and selecting the $k$ items with the \emph{largest}
associated values
\cite{efraimidis1999par,efraimidis2006reservoir,efraimidis2015stream}.
Equivalently, one can generate \emph{exponentially distributed random variates}
$v_i \Def -\ln(\rand)/w_i$ and select the $k$ items with the \emph{smallest}
associated $v_i$ \cite{arratia2002expclock,HubSan19wrs} (``\emph{exponential
  clocks method}''), which is numerically more stable.

\subsection{Search Trees}\label{prel:st}

A B+ tree is a search tree where the inner nodes only store keys and the leaf
nodes store the items (\ie key-value pairs).  The nodes have an arbitrary but
fixed maximum degree $d$; each node except the root is at least half full at any
point in time, \ie inner nodes have at least $\ceil{d/2}$ children and leaf
nodes store at least $\ceil{d/2}$ items.  By linking the leaf nodes it is always
possible to find the next-larger or next-smaller item in constant time.  As with
B-trees, it is possible to implement join and split operations in $\Oh{\log n}$
time for trees of $n$ items (\eg \cite[Chapter 7.3.2]{MehSanPar}).  By
additionally keeping track of subtree sizes, rank and select queries can be
answered in $\Oh{\log n}$ time as well (\eg \cite[Chapter 7.5.2]{MehSanPar}).  A
rank query asks how many items in the tree have smaller keys, and a select$(k)$
query asks for the item with the $k$-th smallest key.

\subsection{Selection from Sorted Sequences}\label{prel:sel}
\newcommand{\selsize}{g}
\newcommand{\selrank}{r}

Let each PE hold a sorted sequence of items, and let $\selsize$ be
an upper bound on the number of local items at any of the $p$ PEs.  We wish to
select the item with global rank $\selrank$ (\ie the item with the globally $\selrank$-th
smallest key).  Which selection algorithm to use depends on the data and our
requirements.  In the remainder of the paper, we will use $\Tsel$ as a
placeholder for the running time of an appropriate selection operation as
defined below for various cases.

\subsubsection{Randomly Distributed Items}\label{prel:sel:rand}
If the input is randomly distributed (\eg all input items are independently
drawn from a common distribution), we can use the algorithm of Sanders
\cite[Lemma 7]{San98a}.  It is based on sorting a sample of $\sqrt{p}$ items and
choosing two pivots such that the key of the item with rank $\selrank$ is one of
a small number of items between the two pivots with sufficiently high
probability.  Taking the sample of $\sqrt{p}$ items can be done in a
communication-efficient manner using Algorithm~P of Sanders \etal~\cite[Sections
3.2 and 4.6]{SandersLHSD18}, which in this case takes time $\Oh{\log p}$ with
high probability.  Each PE holds at most a constant number of samples with high
probability.  When using B+ trees as sorted sequence data structure, selecting a
sampled item from a local B+ tree takes $\Oh{\log(\selsize)}$ time (see
\cref{prel:st}).  Sorting the sample with the fast inefficient sorting algorithm
of Axtmann \etal\ takes time $\Oh{\Tstart \log p}$~\cite{ABSS15}.  Thus, the
entire selection process takes time $\Ohsmall{\log \selsize + \Tstart \log p}$
with high probability \cite{San98a}.

\subsubsection{Approximate Selection}\label{prel:sel:approx}

If the output rank $\selrank$ is allowed to vary in some range
$\underline{\selrank}..\overline{\selrank}$, efficient selection is possible
even if the input is not randomly distributed.  If this variation is large
enough, \ie
$\overline{\selrank}-\underline{\selrank} \in \Om{\overline{\selrank}}$, it is
possible to do selection with expected constant recursion depth.  In that case,
selection is possible in expected time
$\Oh{\log\selsize + \Tstart \log p}$~\cite[Theorem 2, Section
IV-C]{HubSan16topk}.

This algorithm can also be used with more than one pivot element in each level
of the recursion, reducing the gap that is required between
$\underline{\selrank}$ and $\overline{\selrank}$ to achieve constant expected
recursion depth.  When using $d \in \nat$ pivots,
$\overline{\selrank}-\underline{\selrank} \in \Om{\overline{\selrank}/d}$
suffices.  In that case, the expected running time is
$\Oh{d \log \selsize + \Tword d + \Tstart \log p}$ \cite[Lemma 3, Section
IV-C]{HubSan16topk}.

\subsubsection{General Case}\label{prel:sel:general}

If neither of the above special cases is applicable, we can fall back to an
algorithm with expected running time
$\Oh{\log (\selrank p) \cdot \log \selsize + \Tstart \log^2(\selrank
  p)}$~\cite[Section IV-B]{HubSan16topk}.  However, we can improve upon that
algorithm in practice by using the approximate selection algorithm of the
previous paragraph with $\underline{\selrank} = \overline{\selrank} = \selrank$.
In this special case, the algorithm uses the globally smallest item in a
Bernoulli sample with success probability $1/\selrank$ as the pivot element (or
the largest item with success probability $1/(G-\selrank+1)$ if $\selrank$ is
large with regard to the global input size $G$).  This item has expected rank
$\selrank$.  Then it computes the number $\selrank'$ of items less than or equal
to the pivot using an all-reduction. If $\selrank' = \selrank$ return the
pivot's item, if $\selrank' < \selrank$, recursively select the
$\selrank - \selrank'$-th smallest from the items larger than the pivot, else
recurse on the items whose keys are at most as large the pivot.  A variant of
this uses multiple pivot elements as described in \cref{prel:sel:approx} above,
similarly reducing the expected recursion depth.

If $\selrank p$ is very large, and $\Oh{\log^2(\selrank p)}$ latency is
undesirable, we can trade it for increased communication volume by combining the
algorithm of \cref{prel:sel:approx} with random redistribution of the result and
\cref{prel:sel:rand}.  We begin by using the approximate selection algorithm
with $d=1$, $\underline{\selrank}=\selrank$, and
$\overline{\selrank}=2\selrank$.  This takes expected time
$\Oh{\log \selsize + \Tstart \log p}$. Let $v$ be the key of the selected
item. In a second step, all $\Th{\selrank}$ items with keys smaller than $v$ are
randomly redistributed, \eg with a hypercube all-to-all exchange (see, \eg
\cite[Chapter 13.6.2]{MehSanPar}).  This is possible in time
$\Oh{(\Tword \selrank + \Tstart) \log p}$ with high probability (whp) because no
PE ends up with more than $\Th{\selrank}$ items whp by a standard
balls-into-bins analysis.  Now we can apply the algorithm of
\cref{prel:sel:rand} to the shuffled items, which takes time
$\Oh{\log \selsize + \Tstart \log p}$ with high probability.  Overall, the
expected running time is then
$\Oh{\log \selsize + (\Tword \selrank + \Tstart) \log p}$.

Lastly, it is also possible to use a selection algorithm for unsorted inputs.
In a previous paper, we presented an algorithm with
$\Oh{\smash{\selsize} + \Tword \min(\sqrt{p}(1 + \log_p \selsize),\,
  \smash{\selsize}) + \Tstart \log \selsize}$ running time with high probability
\cite[Section IV-A]{HubSan16topk}.

\section{Reservoir Sampling}\label{res}

The basic idea of our algorithm is to maintain a distributed reservoir using a
communication-efficient bulk priority queue~\cite{HubSan16topk}.  Each PE holds
those items of the sample that were seen in its input, and no PE gets a special
role (\eg coordinator).

First, we adapt the skip value computation of Efraimidis and
Spirakis~\cite{efraimidis2006reservoir} in \cref{res:skip}, yielding a
sequential weighted reservoir sampling algorithm, before introducing our main
algorithm in \cref{res:w}.  \Cref{res:u} gives a straight-forward adaptation to
uniform reservoir sampling, and \cref{res:var} describes an optimization if the
sample size is allowed to vary in a fixed interval.  Lastly, \cref{res:gather}
outlines a %
more centralized approach as a comparison point.

\subsection{Skip Values (``Exponential Jumps'')}\label{res:skip}

Efraimidis and Spirakis~\cite[Section 4]{efraimidis2006reservoir} show how to
compute the amount of weight that does not yield a sample, \ie how much weight
to skip until an item is sampled.  They refer to this
technique as \emph{exponential jumps}.  Here, we show how to adapt their method
to the exponentially distributed variates described in \cref{prel:exp}.  This allows for faster
and numerically more efficient generation in practice.  The difference between
the variates associated with items in the method of Efraimidis and
Spirakis~\cite{efraimidis2006reservoir} and here is a simple $x \mapsto\nobreak -\ln(x)$
mapping.  Because of the sign inversion, the reservoir $R$ now contains the items
with the \emph{smallest} associated keys.  Let $v_i$ denote the key of item
$i$, that is, the exponentially distributed variate associated with it, and
define $T \Def \max_{i \in R}v_i$ as the threshold value, \ie the largest key of
any item in the reservoir.  Then the skip value $X$ describing the total amount
of weight to be skipped before an item next enters the reservoir can be computed
as $X \Def -\ln(\rand)/T$.  This is an exponential random deviate with rate parameter~$T$.
The key associated with the item $j$ that is to be inserted into the reservoir
is then $v_j \Def -\ln(\rand[e^{-Tw_j},1])/w_j$, where
$\rand[a,b]\Def a+\rand(b-a)$ generates a uniform random
variate from the interval $(a,b]$.  The range of this variate has been chosen so
that $v_j$ is less than $T$ (as at this stage, it has already
been determined that item $j$ must be part of the reservoir, we need to compute
a suitable variate from the distribution associated with its weight).  Item $j$ then
replaces the item with the largest key in the reservoir, and the threshold $T$
is updated to the now-largest key of any item in the reservoir.

\subsection{Fully Distributed Weighted Reservoir Sampling}\label{res:w}

\begin{figure}
  \centering
  \includegraphics{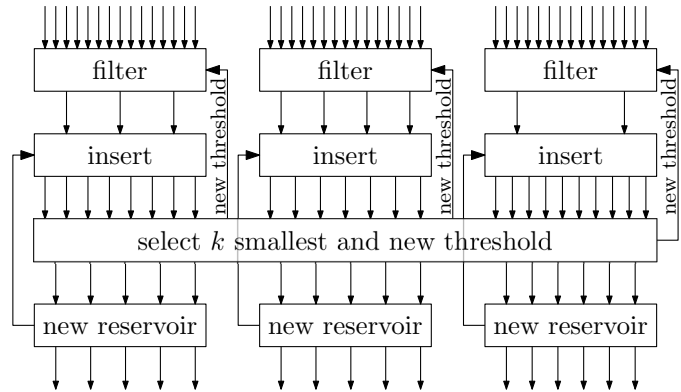}
  \caption{Schematic view of data flow in our algorithm of \cref{res:w}.}\label{fig:schema}
\end{figure}

The basic idea of our distributed algorithm is to combine this sequential algorithm with a
bulk distributed priority queue to maintain the set of the $k$ items with the
globally smallest keys, \ie the sample.  Each PE is solely responsible for the
items in its input, and no PE has a special role (such as a coordinator node).
During each batch, a PE inserts into its \emph{local} reservoir all items whose
key is smaller than some threshold.  When a batch finishes, the PEs perform a
distributed selection for the $k$-th smallest key, which becomes the new
threshold, and discard all items with larger keys.  The remaining elements form
the new sample.

We now flesh out the remaining parts of our algorithm.  Firstly, the reservoir
is maintained in a distributed fashion.  Each PE's \emph{local reservoir} is a
B+ tree that is augmented to also support the \emph{split}, \emph{rank}, and
\emph{select} operations in logarithmic time (see \cref{prel:st}).  The
\emph{split} operation is used to quickly discard the items that are no longer
part of the sample at the end of a batch, while \emph{rank} and \emph{select}
are required for the threshold selection process.

Secondly, the algorithm \emph{globally} maintains the threshold $T$, which is
the same at all PEs and does not change during a mini-batch.  The PEs process
their items using the skip distance method described in \cref{res:skip},
inserting the new candidate sample items into their local reservoirs.

Lastly, once all items of the mini-batch are processed, the PEs jointly select
the globally $k$-th smallest key (see \cref{prel:sel}) in the union of all local
reservoirs.  This key becomes the insertion threshold for the next mini-batch.
Each PE then discards all items with larger keys using a \emph{split} operation
on its local reservoir.  The remaining $k$ items in the union of all local
reservoirs are a weighted sample without replacement of size $k$ of all items
seen so far.  Algorithm \ref{alg:w} gives pseudocode and \cref{fig:schema} shows
a schematic view of the algorithm.

\begin{algorithm}[bt]
\vspace*{-0.5em}
\begin{code}
  \Funct{processBatch}{%
    \Declare{$A$}{Item\BR{}},
    \Declare{$T$}{$\real$},
    \Declare{$R$}{Reservoir}}{$\real \times \text{Reservoir}$}\+\label{}\\
  Item: $\rplus \times \nat$ with weight $w\in\rplus$, index $i\in\nat$\label{}\\
  Reservoir: B+ tree mapping keys from $\real$ to item IDs\label{}\\
  \If{$T < 0$} \Then \Comment{fewer than $k$ items seen globally before this batch}\+\label{}\\
    \Foreach{$(w,i) \in A$} \Do\+\\
      $R$.insert($-\ln(\rand)/w$, $i$)\-\-\Comment{exponentially distributed keys}\label{}\\
  \Else\+\label{}\\
    \DeclareInit{j}{$\nat$}{$0$};\Comment{1-based index of next item, initially invalid}\label{}\\
    \While $j\leq\card{A}$ \Do\+\label{}\\
      \DeclareInit{$X$}{$\real$}{$-\ln(\rand)/T$}\Comment{weight to be skipped}\label{}\\
      \While $X>0$ \Do\Comment{skip $X$ amount of weight in total}\+\label{}\\
        $j \Def j+1$\label{}\\
        \If $j > \card{A}$ \Then break from both loops\label{c:break}\\
        $X \Def X-A[j].w$\-\label{}\\
      \DeclareInit{$x$}{$\real$}{$\exp(-T \cdot A[j].w)$}\label{}\\
      \DeclareInit{$v$}{$\real$}{$-\ln(\rand[x,1])/A[j].w$}\Comment{new key}\label{}\\
      $R$.insert($v$, $A[j].i$)\-\-\label{}\\

  \DeclareInit{$(T,i)$}{$\real \times \nat$}{select($R$, $k$)}\Comment{select $k$ globally smallest and new threshold}\label{}\\
  $(R, \_) \Def R$.splitAt($i$) \Comment{discard local items with larger keys}\label{}\\
  \Return $(T, R)$ \Comment{return new threshold and reservoir}\label{}
\end{code}
 \vspace*{-0.5em}
\caption{Pseudocode for weighted reservoir sampling in a Single-Program
  Multiple-Data style.  Inputs: local portion of the mini-batch $A$, old
  threshold $T$ (initially $-\infty$), and local reservoir $R$ (initially
  empty).}\label{alg:w}
\end{algorithm}

\begin{theorem}\label{thm:res}
  For weighted reservoir sampling with sample size $k$, processing a mini-batch
  of up to $b$ items is possible in time $\Oh{b + (b^*+1)\log(b^*+k) + \Tsel}$,
  where $b^*\leq b$ is the maximum number of items from the mini-batch on any PE
  that is below the insertion threshold, and $\Tsel$ is the time for selection
  from sorted sequences of size at most $b^*+k$ per PE (see \cref{prel:sel}).
\end{theorem}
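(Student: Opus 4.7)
The plan is to decompose the running time of \cref{alg:w} into three parts: the warm-up branch ($T<0$), the skip-distance traversal together with its insertions, and the post-batch selection and split. First I would dispose of the warm-up branch: fewer than $k$ items have been seen globally, so every local item receives a fresh key and is inserted into $R$, costing $\Oh{b\log(b+k)}$. Since $b^*=b$ in this branch, this already matches the claimed bound.

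Next I would bound the driver loop of the main branch. Each iteration of the outer \textbf{while} draws $X=-\ln(\rand)/T$ in constant time, while the inner \textbf{while} advances $j$ and subtracts one weight from $X$ per iteration. Since $j$ is monotone and bounded by $|A|\leq b$, amortizing the inner iterations across the entire batch yields an $\Oh{b}$ scanning cost. By the construction of $v_j=-\ln(\rand[e^{-Tw_j},1])/w_j$, every inserted item receives a key strictly less than $T$, so a PE performs at most $b^*$ insertions by definition of $b^*$. The local reservoir begins the batch with at most $k$ items and grows by at most $b^*$ before \texttt{splitAt}, so each insertion hits a B+ tree of size at most $b^*+k$ and costs $\Oh{\log(b^*+k)}$ by \cref{prel:st}, contributing $\Oh{b^*\log(b^*+k)}$ in total.

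For the post-batch step, the distributed selection on sorted sequences of length at most $b^*+k$ costs $\Tsel$ by definition, and a single \texttt{splitAt} on the local B+ tree adds $\Oh{\log(b^*+k)}$ (\cref{prel:st}). Summing all contributions yields
\[
\Oh{b+b^*\log(b^*+k)+\log(b^*+k)+\Tsel}=\Oh{b+(b^*+1)\log(b^*+k)+\Tsel},
\]
as claimed. The step I expect to require the most care is the amortization of the inner \textbf{while} loop: a naive per-outer-iteration bound would overcount, so one must observe that the monotone progression of $j$ ties the total inner work to $|A|\leq b$ rather than to the number of insertions, which is crucial for separating the $\Oh{b}$ scanning term from the $\Oh{(b^*+1)\log(b^*+k)}$ reservoir-maintenance term.
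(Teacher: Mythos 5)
Your running-time analysis follows the same decomposition as the paper — $\Oh{b}$ for the linear scan, $\Oh{b^*\log(b^*+k)}$ for the insertions into a local reservoir whose size never exceeds $b^*+k$, $\Oh{\log(b^*+k)}$ for the split, and $\Tsel$ for the selection — and your explicit amortization of the inner \textbf{while} loop (total inner iterations bounded by the monotone progression of $j$ up to $\card{A}\leq b$, independent of the number of insertions) makes precise a step the paper merely asserts. The warm-up branch is also handled cleanly via $b^*=b$.

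However, you have omitted roughly half of what the paper actually proves. The theorem is not just a cost bound on an already-trusted algorithm; it implicitly asserts that this distributed procedure still produces a correct weighted reservoir sample, and the algorithm deviates from the sequential exponential-jumps method in two ways that each require justification. First, at the end of a local stream the leftover skip weight $X$ is discarded rather than carried into the next batch (\cref{alg:w}, line~\ref{c:break}); the paper argues this is safe because every unit of weight spawns a sample with the same marginal probability regardless of where the skip process is (re)started, so truncating and restarting $X$ at batch boundaries does not bias the sample. Second, the threshold $T$ is held fixed for the entire mini-batch instead of being updated after each insertion; the paper argues this is safe because for \emph{any} fixed threshold the set of items with keys below it is a weighted random sample of a priori unknown size, the fixed-$T$ phase merely produces a sample of size at least $k$, and the final selection restores exact size $k$. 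Without these two arguments, your proposal establishes the running time of a procedure but not that the procedure solves the problem stated in the theorem; I would add a short paragraph covering both points.
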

\begin{proof}
    By definition of $b^*$, the local insertions require time
  $\Oh{b^*\log(b^*+k)}$ in total because each local reservoir has size at most
  $k$ at the start of the batch.  Since we have to process each item's weight
  even when using the above skip value technique, $\Oh{b}$ time is required to
  identify the items to be inserted into the reservoir.  The selection operation
  takes time $\Tsel$ which varies depending on the specifics of the input.  The
  number of candidate items per PE for the selection is clearly bounded by the
  local reservoir size of at most $k+b^*$.  The split operation to discard the
  items with keys exceeding the new threshold takes time logarithmic therein,
  \ie $\Oh{\log(k+b^*)}$ time.

  Now consider the implications of splitting the single stream of the sequential
  case into multiple independently-handled streams without carrying over any
  remaining skip weight $X$ at the end of the stream (line \ref{c:break} of
  \cref{alg:w}).  This maintains correctness because each unit of weight has
  the same probability of spawning a sample \emph{by design}, regardless of when
  the procedure was started.

  Keeping the threshold fixed for the duration of each mini-batch also maintains
  correctness because the set of all items above \emph{any} threshold always
  forms a weighted random sample (whose size is not known a priori).  Here, the
  threshold is chosen so that the sample size is guaranteed to be at least $k$,
  and the selection process determines a new threshold to restore a fixed sample
  size of $k$ items.
\end{proof}

Observe that for the case of randomly assigned items, $\Tsel$ is $\Oh{\log(b^*+k) + \Tstart \log p}$
(see \cref{prel:sel:rand}), which simplifies to an $\Tstart \log p$
contribution to the bound of \cref{thm:res}.

Furthermore, when using the combined selection method of
\cref{prel:sel:general}, only the new candidates (up to $b^*$ per PE) have to be
redistributed.  The first, approximate, selection can also be skipped in most
cases, because the total number of new candidates is very unlikely to exceed
$k$, especially in later batches.  Thus, we can simply count the total number of
new candidates -- call it $c\leq b^*p$ -- with an all-reduction in time
$\Oh{\Tstart \log p}$.  If $c\leq k$, we can directly proceed with
redistribution \emph{of the $c$ new candidates}.  The items that were already in
the reservoir before this round do not need to be redistributed because they
were already randomly shuffled before.  This reshuffling of relatively few items
can be done using a hypercube all-to-all exchange (\eg \cite[Chapter
13.6.2]{MehSanPar}, see also \cref{prel:sel:general}). If $b^*$ is small (and
therefore, so is $c$), a \emph{sparse asynchronous all-to-all}, using
non-blocking sends and a non-blocking barrier, could prove faster in practice.
Finally, the $k$ smallest keys from the union of the old sample and the new
candidates are selected using the algorithm of \cref{prel:sel:rand}.  In total,
this method takes time
$\Tsel \in \Oh{\log(b^*+k) + \Tword \min(b^*,k)\log p + \Tstart \log p}$ for
selection with hypercube all-to-all, or
$\Tsel \in \Oh{\log(b^*+k) + \Tstart (b^* + \log p)}$ with a sparse asynchronous
all-to-all.

The question now is how many items we (unnecessarily) insert into the local
reservoirs by leaving the threshold unchanged during a mini-batch.  We first
analyze the number of insertions into \emph{each} PE's local reservoir in
Lemma~\ref{lem:load}, before considering the expected maximum number of insertions
into \emph{any} PE's local reservoir in \cref{thm:load}.

\begin{lemma}\label{lem:load}
  If the item weights are %
  independently drawn from a common
  continuous distribution %
  and all batches have the same number of items on every PE,
  then our algorithm inserts
  $\Ohsmall{\frac kp(1+\log\frac{n}{k})}$ items into each local reservoir in
  expectation.
\end{lemma}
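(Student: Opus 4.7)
The plan is to reduce the per-PE claim to a bound on the expected \emph{total} insertions across all PEs using balanced-batch symmetry, and then bound that total via exchangeability of the i.i.d.\ keys.

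\textbf{Reduction by symmetry.} Because the weights $w_i$ are i.i.d.\ from a common continuous distribution and each $v_i=-\ln(\rand)/w_i$ is formed with an independent uniform variate, the keys $v_1,\ldots,v_n$ are themselves i.i.d.\ from a common continuous distribution $F$. Whether an item ends up inserted into its PE's local reservoir depends only on its own key and on the threshold used during its batch, and the threshold is a deterministic function of the keys from earlier batches alone; hence every item in a given batch has the same marginal insertion probability. Since each batch is perfectly balanced across PEs, the expected insertion count at any one PE equals $1/p$ times the total expected insertions, so it suffices to bound that total by $\Oh{k(1+\log(n/k))}$.

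\textbf{Per-item insertion probability.} For any batch $t$ whose preceding batches together contain $N_{t-1}\ge k$ items, the threshold $T_t$ used throughout batch $t$ equals the $k$-th smallest of $v_1,\ldots,v_{N_{t-1}}$. For any item $i$ in batch $t$, its key $v_i$ is independent of those earlier keys, and all $N_{t-1}+1$ keys in question are i.i.d.\ draws from $F$. By exchangeability, the rank of $v_i$ among them is uniform on $\set{1,\ldots,N_{t-1}+1}$, whence $\prob{v_i<T_t}=k/(N_{t-1}+1)$. In the warm-up phase (batches with $N_{t-1}<k$) every item is inserted, but the warm-up processes $\Oh{k}$ items in total and thus contributes $\Oh{k}$ insertions.

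\textbf{Summation and main obstacle.} Writing $a_t=N_t-N_{t-1}$ for the size of batch $t$, the total expected insertions past the warm-up equal $k\sum_t (N_t-N_{t-1})/(N_{t-1}+1)$, which I upper-bound by the telescoping Riemann-style comparison $k\int_k^n dx/x=k\log(n/k)$; combining with the $\Oh{k}$ warm-up contribution and dividing by $p$ yields the lemma. The main technical care lies in the exchangeability step: I must certify that $T_t$ is measurable with respect only to the keys of batches $1,\ldots,t-1$ (hence independent of each fresh $v_i$), that the $v_i$ are genuinely i.i.d.\ under a continuous $F$ so that ranks are almost surely well-defined and ties can be ignored, and that the Riemann comparison is valid in the stated regime of balanced, uniformly sized batches (which keeps $N_t/N_{t-1}$ bounded and so prevents the discrete sum from outstripping the integral).
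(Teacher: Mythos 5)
Your proof takes essentially the same route as the paper's: compute the per-item insertion probability as $k$ divided by the number of items seen before the batch, handle the warm-up phase (fewer than $k$ items seen) separately with an $\Oh{k}$ contribution, and bound the remaining sum by a harmonic/logarithmic argument to obtain $\frac{k}{p}(1+\log\frac{n}{k})$. The only stylistic differences are that you derive the insertion probability $k/(N_{t-1}+1)$ from exchangeability of the i.i.d.\ keys rather than invoking Efraimidis and Spirakis's Proposition 7 as the paper does, and that you bound the global total and divide by $p$ via batch balance whereas the paper works per-PE directly — both are minor reframings of the same calculation.
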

\begin{proof}
  Efraimidis and Spirakis~\cite[Proposition 7]{efraimidis2006reservoir} show that if
  the $w_i$ are independent random variates from a common continuous
  distribution, their sequential reservoir sampling algorithm inserts
  $\Oh{k \log(n/k)}$ items into the reservoir in expectation.  We adapt this to
  mini-batches of $b$ items per PE. Let $X_i$ denote the number of insertions on
  a PE for batch~$i$.  We obtain a binomially distributed random variable with
  expectation
  \[
    \expect{X_i} = \sum_{j=1}^{b}{\prob{\text{item $j$
          is inserted}}} = b\cdot\frac{k}{n_\mathrm{pre}} \leq b\cdot\frac{k}{ipb}=\frac{k}{ip},
  \]
  where $n_\mathrm{pre}$ is the number of items seen globally before the batch
  began.  For the initial $i_0=\frac{k}{bp}$ iterations, this probability
  exceeds one, which we for with $b$ insertions per PE, \ie
  $b\cdot\frac{k}{bp}=k/p$ insertions overall.  For mini-batches
  $i_0 \leq i < \frac{n}{pb}$ we obtain
  \begin{align*}
    \expect{\sum X_i} & \leq \! \sum_{\frac{k}{bp}\leq i\leq \frac{n}{bp}}{\frac{k}{ip}}
                        = \frac{k}{p} \! \sum_{\frac{k}{bp}\leq i\leq \frac{n}{bp}}{\frac1i}
                        = \frac{k}{p}\left(H_{\frac{n}{bp}}-H_{\frac{k}{bp}}\right)\\
                      & \leq \frac{k}{p}\left(1+\ln\frac{n}{bp}-\ln\frac{k}{bp}\right)
                        = \frac kp\left(1+\ln\frac{n}{k}\right),
  \end{align*}
  where $H_n$ is the $n$-th harmonic number.
\end{proof}

\begin{theorem}\label{thm:load}
  If the item weights are %
  independently drawn from a common
  continuous distribution %
  and all batches have the same number of items on every PE,
  then our algorithm inserts no more than
  $\Oh{\frac kp \log\frac{n}{k} + \log p}$ items into \emph{any} local
  reservoir in expectation.
\end{theorem}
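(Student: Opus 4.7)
The plan is to combine the per-PE expectation bound $\mu = \Oh{\frac{k}{p}(1+\log\frac{n}{k})}$ from \cref{lem:load} with a conditional Chernoff tail bound and a union bound over the $p$ PEs, then integrate the tail to obtain a bound on $\expect{\max_q X_q}$, where $X_q$ denotes the number of insertions at PE $q$.

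The key structural step I would exploit is that, because item weights are IID from a continuous distribution and every batch has the same size on every PE, the joint distribution of the per-PE insertion counts is invariant under permutations of the PEs.  Equivalently, we may couple our algorithm with a version that assigns each batch's items to PEs by an independent uniformly random permutation; the two versions produce identically distributed vectors $(X_1,\ldots,X_p)$.  Let $N_b$ denote the total number of insertions in batch $b$ across all PEs, and let $I=\sum_b N_b$.  Conditional on $(N_1,N_2,\ldots)$, the per-PE counts $(N_{b,1},\ldots,N_{b,p})$ in batch $b$ follow a multivariate hypergeometric distribution ($N_b$ ``inserted'' slots drawn uniformly from $pb$ slots split into $p$ groups of $b$), with marginal mean $N_b/p$; across batches they are conditionally independent because the permutations in different batches are independent and the thresholds depend on previously inserted items only through their keys, not through their PE labels.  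Hence $X_q$, conditional on $(N_b)_b$, is a sum of independent hypergeometric variables with total mean $I/p$.

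Hypergeometric variables satisfy the same one-sided Chernoff bounds as binomials, giving $\prob{X_q>2I/p+t\mid (N_b)_b}\leq e^{-\Om{t}}$ for all $t>0$.  A union bound over the $p$ PEs yields $\prob{\max_q X_q>2I/p+c\log p+s\mid (N_b)_b}\leq e^{-\Om{s}}$ for a suitable constant $c$, which upon integration in $s$ gives $\expect{\max_q X_q \mid (N_b)_b}\leq 2I/p + \Oh{\log p}$.  Taking expectation over $(N_b)_b$ and using $\expect{I}\leq p\mu$ (by linearity and \cref{lem:load}) delivers the claimed $\expect{\max_q X_q}=\Oh{\frac{k}{p}\log\frac{n}{k}+\log p}$.

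The main obstacle I anticipate is rigorously justifying the conditional exchangeability and independence used above: one must verify that the thresholds $T_b$ depend on previously inserted items only through their keys, not through their PE labels, so that conditioning on $(N_b)_b$ leaves the PE assignment within each batch uniform and independent across batches.  The IID-weights and equal-batch-sizes assumptions are essential here; together they ensure both the PE symmetry used to reduce to the random-assignment coupling and the item exchangeability within each batch that underpins the hypergeometric distribution of per-PE counts.
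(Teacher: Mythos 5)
Your argument is correct and takes a genuinely different route from the paper. The paper's proof applies a Normal approximation to the per-batch insertion count $X_i$ (treating it as binomial), sums these to obtain another Normal, and invokes the bound $\expect{\max_{j\leq p}Z_j}\leq \mu+\sigma\sqrt{2\ln p}$ for Normals with mean $\mu$ and variance $\sigma^2\leq\mu$, giving $\mu+\sqrt{2\mu\ln p}=\Oh{\mu+\log p}$ by AM--GM. That argument is short but heuristic: the actual per-batch count is a mixture of binomials (the threshold is itself random), summing variances tacitly assumes independence across batches, and the Normal approximation is left unquantified. Your route avoids all of this: by noting that the thresholds are invariant under permutations of items within a batch, you can couple with a random-assignment process, condition on the global per-batch insertion counts $(N_b)_b$, and obtain an \emph{exact} conditional distribution — independent multivariate hypergeometrics across batches — to which a Chernoff tail bound (hypergeometric is stochastically dominated in the Chernoff sense by binomial, and negatively associated Bernoullis sum nicely) applies rigorously. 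Union bound, integrate, take outer expectation, done. The paper's version buys brevity; yours buys a fully rigorous argument with the dependence structure explicitly handled. Two small points to keep in mind if you flesh this out: you should say explicitly that the first $\Th{k/(bp)}$ ``fill-up'' batches, in which every item is inserted, contribute $k/p$ deterministically to every PE (so they add to the mean but not to the fluctuation); and the tail bound $\prob{X_q>2I/p+t}\leq e^{-\Om{t}}$ for all $t>0$ is fine, but it is cleanest to observe that for $t<I/p$ the Chernoff bound at threshold $2I/p$ already yields $e^{-\Om{I/p}}\leq e^{-\Om{t}}$, while for $t\geq I/p$ the multiplicative bound with $\delta\geq 1$ gives $e^{-\Om{t}}$ directly.
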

\begin{proof}
To obtain the maximum load over all PEs, we apply a Normal approximation to the
bound on the $X_i$ from the proof of Lemma~\ref{lem:load}, obtaining
$Y_i \sim \mathcal{N}\left(\frac{k}{ip},
  \frac{k}{ip}\left(1-\frac{k}{ipb}\right)\right)$.  Summing these over the
mini-batches as above, we again obtain a Normal distribution whose mean and
variance are the sum of its summands' means and variances.  We then apply a
bound on the maximum of i.i.d Normal random variables obtained using the
Cramér-Chernoff method \cite[Chapter 2.5]{BLM13concineq},
$\expect{\max_{j=1..p}Z_j} \leq \mu + \sigma\sqrt{2\ln p}$ for
$Z_j \sim \mathcal{N}(\mu, \sigma^2)$.  Using the mean of
the $Y_i$ as an upper bound to their variance, we obtain
$\mu + \sqrt{2\mu\ln p}$ as an upper bound to the maximum per-PE load for
$\mu = \frac kp (1+\ln\frac nk)$.  Thus, the expected \emph{bottleneck} number
of insertions into any local reservoir is
$\Ohsmall{\frac kp \log \frac nk + \log p}$.
\end{proof}

\subsection{Uniform Reservoir Sampling}\label{res:u}

The above algorithm can be easily adapted to uniform items by using the
well-known skip distances for uniform reservoir sampling \cite[p. 640,
\emph{``Reservoir sampling with geometric jumps''}]{devroye1986}.
Here, we adapt Devroye's algorithm to our notation and model.  Initially, when no threshold
is known, the keys of the items are simply uniform random deviates between 0 and
1, generated by $\rand$.  The number of items to be skipped then follows a
geometric distribution with success probability $T$ and can be computed as
$X \Def \floor{\ln(\rand)/\ln(1-T)}$ for a given threshold $T$.  The key of the
$X+1$-th item, which is inserted into the local reservoir, is then simply
$v \Def \rand \cdot T$.  The remainder of the algorithm works
analogously to the weighted case.  Note that skipping $X$ items is a
constant-time operation, whereas skipping $X'$ amount of weight in the weighted
case requires examining every item that is skipped.  As a result, the asymptotic
local processing time for a batch of uniform items is the number of items
inserted into the reservoir times the time to insert them.

\begin{corollary}\label{cor:u}
  For unweighted (uniform) reservoir sampling with sample size $k$, processing a
  mini-batch is possible in $\Oh{(b^*+1) \log(b^*+k) + \Tsel}$ time,
  where $b^* \leq b$ is the maximum number of items from the mini-batch on any
  PE that is below the insertion threshold, and $\Tsel$ is the time for
  selection from sorted sequences of size at most $b^*+k$ per PE (see
  \cref{prel:sel}).
\end{corollary}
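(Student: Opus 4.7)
The plan is to follow the structure of the proof of \cref{thm:res}, while exploiting the key advantage of the uniform case: skipping $X$ items is a constant-time operation, so the $\Oh{b}$ term needed for weight scanning in the weighted case disappears.

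First, I would bound the local work per batch. Using Devroye's geometric-jump procedure described above, each iteration of the outer loop generates one skip distance $X$ in constant time, advances the local index by $X+1$, and then inserts exactly one candidate item into the local reservoir (whose size stays bounded by $b^*+k$ throughout the batch, by definition of $b^*$). Hence the number of outer iterations is at most $b^*+1$, and each iteration costs $\Oh{\log(b^*+k)}$ for the B+ tree insertion, yielding $\Oh{(b^*+1)\log(b^*+k)}$ local work (the $+1$ covers the overhead of generating the first skip distance even when no insertion occurs, and the initial batch handling before a threshold is available).

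Next, I would reuse verbatim the selection and split arguments from \cref{thm:res}: after the batch is consumed, the PEs jointly select the globally $k$-th smallest key over the union of local reservoirs of size at most $b^*+k$, costing $\Tsel$, and then each PE performs one split on its local B+ tree in $\Oh{\log(b^*+k)}$ time. Summing these contributions gives the claimed bound $\Oh{(b^*+1)\log(b^*+k) + \Tsel}$.

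The only point that requires a brief justification is correctness of freezing the threshold across a batch and restarting the skip process at the batch boundary without carrying over the residual count. This is identical in spirit to the corresponding argument in the proof of \cref{thm:res}: the geometric skip distribution is memoryless, so discarding the residual skip at a batch boundary does not change the distribution of which items enter the reservoir, and any fixed threshold~$T$ still yields a valid uniform sample (of unknown size), which the subsequent global selection trims back to exactly $k$. I expect this memorylessness argument to be the main conceptual step; the time-bound accounting is routine once the constant-time skip property is used.
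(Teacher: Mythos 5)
Your proposal is correct and matches the paper's (informal, prose) argument for this corollary: the paper likewise derives the bound by observing that constant-time geometric skips eliminate the $\Oh{b}$ scanning term from \cref{thm:res}, leaving only the $\Oh{(b^*+1)\log(b^*+k)}$ insertion cost plus $\Tsel$, with correctness of the frozen per-batch threshold carried over from \cref{thm:res}. The memorylessness framing you give is a clean way to phrase the same ``each item has the same inclusion probability regardless of when the skip procedure was (re)started'' argument the paper uses.
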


Observe that the criteria for random distribution of \cref{prel:sel:rand} are
much easier to satisfy for uniform sampling, resulting in only an
$\Tstart \log p$ contribution for the $\Tsel$ term in the running time bound in
these cases.  For example, a uniform arrival rate for all PEs suffices, as the
keys associated with the items are uniformly random.

Refer to Lemma~\ref{lem:load} and \cref{thm:load} above for an analysis of the
number of items below the insertion threshold over all mini-batches.

\subsection{Reservoir Sampling with Variable Reservoir Size}\label{res:var}

For any given threshold $T$, the items with keys less than or equal to $T$ form a
sample without replacement of all items seen so far.  The size of this sample --
call it $s$ -- is not known a priori, and in the previous sections, we used
selection from the local reservoirs to determine the threshold so that $s = k$.
If, however, the precise sample size is not important to the application,
and $s$ is allowed to vary in some range $\kmin..\kmax$, we can do better than
for fixed $k$.  By using the \emph{amsSelect} approximate selection algorithm
\cite[Section IV-C]{HubSan16topk} (see also \cref{prel:sel:approx}), selection
converges much faster if $\kmax-\kmin$ is sufficiently large (\eg a constant
factor apart).

\begin{corollary}
  If the sample size is allowed to vary between $\kmin$ and $\kmax$, and
  $\kmax - \kmin = \Om{\kmax}$, then $\Tsel = \Tstart \log p$ in the running
  time of \cref{thm:res} and Corollary~\ref{cor:u}.
\end{corollary}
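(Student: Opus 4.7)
The plan is to reduce this corollary directly to the approximate selection result recalled in \cref{prel:sel:approx}. The threshold selection carried out at the end of each mini-batch no longer needs to return an item with an exact global rank; any rank in $\kmin..\kmax$ is acceptable. So I would set $\underline{\selrank} \Def \kmin$ and $\overline{\selrank} \Def \kmax$ and invoke the \emph{amsSelect} algorithm on the union of the local reservoirs.

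The hypothesis $\kmax - \kmin \in \Om{\kmax}$ is exactly the ``large enough gap'' condition under which that algorithm runs in expected constant recursion depth. By \cref{prel:sel:approx} (with $d=1$), the expected running time of the selection is therefore $\Oh{\log \selsize + \Tstart \log p}$, where $\selsize$ is the maximum number of items held in any local reservoir at the selection step. As in \cref{thm:res} and Corollary~\ref{cor:u}, this local size is bounded by $b^* + \kmax$ (the pre-batch sample size plus the new insertions on a single PE), so $\log \selsize \in \Oh{\log(b^*+\kmax)}$.

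The remaining observation is that the $\Oh{\log(b^*+\kmax)}$ contribution is already absorbed by the $\Oh{(b^*+1)\log(b^*+k)}$ term that appears in the bounds of \cref{thm:res} and Corollary~\ref{cor:u} (with $k$ read as $\kmax$). Consequently the only non-absorbed contribution from $\Tsel$ is the latency $\Tstart \log p$, which is what the corollary claims.

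I do not anticipate a real technical obstacle: the argument is essentially a parameter substitution into an earlier selection bound, together with the book-keeping observation that the $\log(b^*+\kmax)$ cost is subsumed by the cost of the local insertions. The only subtlety worth a sentence is to check that it is legitimate to apply the approximate selector despite the local sequences not necessarily being randomly distributed: the algorithm of \cref{prel:sel:approx} is exactly designed for this general case, so the substitution goes through without extra assumptions on the input.
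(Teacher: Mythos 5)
Your proposal is correct and follows essentially the same approach as the paper: the paper leaves the corollary without an explicit proof, but the intended argument is exactly the parameter substitution you describe, invoking the approximate selection bound of \cref{prel:sel:approx} with $\underline{\selrank}=\kmin$, $\overline{\selrank}=\kmax$, and observing that the $\Oh{\log \selsize}$ term is absorbed by the $\Oh{(b^*+1)\log(b^*+k)}$ term already present in \cref{thm:res} and Corollary~\ref{cor:u}, mirroring the analogous absorption argument the paper makes after \cref{thm:res} for the randomly-distributed case.
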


Observe that if the items come from a common distribution, once $n \gg k$ items
have been processed, turnover in the sample is very low.  Accordingly, only few
items have keys below the threshold to enter the reservoir in each batch.  As a
result, we can forego selection and let the sample grow until $s>\kmax$,
potentially for multiple mini-batches.  Only then does the selection have to find a new threshold.
Additionally, the selection is faster because it does not have to find the item
with a particular precise rank, but only \emph{some} item in a given range of
ranks.

\begin{figure}[t]
  \centering
  \includegraphics{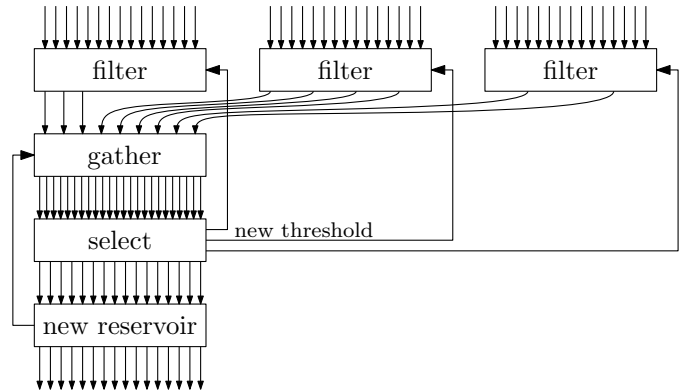}
  \caption{Schematic view of data flow in the centralized gathering
    algorithm of \cref{res:gather}.}\label{fig:schema_gather}
\end{figure}

\subsection{A Centralized Approach}\label{res:gather}

As a comparison point to our fully decentralized algorithms, we now describe an
approach that designates a single PE as the coordinator, which maintains the
sample.  Again, the PEs use a threshold to immediately discard any items that
cannot be part of the sample (in the first batch, if a PE receives more than $k$
items, only the $k$ items with the smallest keys need to be retained).  The
remaining candidates are gathered at the coordinator, which uses a standard
sequential selection algorithm (\eg quickselect) to sequentially select the $k$
smallest items, discards the rest, and broadcasts the new threshold.
\Cref{fig:schema_gather} shows a schematic view of the algorithm.

Observe that the number of items gathered in later batches is small in expectation, as only few
new items enter the reservoir if the items follow the same distribution in all
mini-batches.  This algorithm can be seen as an adaptation of Jayaram \etal's
method~\cite{JSTW19wres} to a mini-batch model, which renders the level set
construction used therein unnecessary.

\section{Implementation Optimizations}\label{impl}

We now discuss two optimizations that make implementations of our
algorithm more efficient.

To avoid inserting too many items into the reservoir during the first mini-batch
if its local size $b$ is large compared to the sample size $k$, we use the key
of \emph{local rank} $k$, which is periodically re-determined, as a local
threshold in the first batch.  More concretely, if $b \geq \max(1.5k, k+500)$,
we use the key of \emph{local rank} $k$ as threshold for subsequent items, and
refresh this local threshold every time the local reservoir exceeds
$\max(1.1k, k+250)$ items, discarding those that are larger.  This maintains
correctness: at no point is a local reservoir pruned to a size smaller than $k$,
so the union of all local reservoirs is guaranteed to be of size at least $k$.
It also maintains the property that each local reservoir is a sample without
replacement of some size $k'$ of all items seen so far this PE.

Furthermore, to speed up the innermost loop of Algorithm~\ref{alg:w}, we compute
the sum of weights of 32 items at a time, check whether the amount of weight to
be skipped ($X$) is larger than this sum, and skip all 32 items at once if this is the case.
This reduces the number of conditional branches and allows for vectorization
using the CPU's Single Instruction Multiple Data (SIMD) units.  Both of these
factors speed up processing of the items in a batch significantly -- especially
as typically, only few items per batch end up contributing to the reservoir (and $X$ is much
larger than the average weight) once $n\gg k$ items have been processed.

\section{Experiments}\label{exp}

We implemented our weighted reservoir sampling algorithm of \cref{res:w} with
the approximate selection algorithm of \cref{prel:sel:approx}, using one
(labeled \emph{``ours''}) or several pivots (\emph{``ours-8''} for 8 pivots,
chosen in a preliminary experiment) and exact bounds (\ie
$\kmin=\kmax=k$)\footnote{In this configuration, its asymptotical running time
  matches the non-approximate algorithm \cite[Section IV-B]{HubSan16topk}, but
  the approximate algorithm's more advanced pivot selection speeds up
  convergence in practice.}, as well as the centralized gathering algorithm of
\cref{res:gather}.  Here, we present the results of strong and weak scaling
experiments on a supercomputer.  Our evaluation is structured as follows.
First, we describe the setup and implementation details in \cref{exp:setup}.
The strong and weak scaling evaluations follow in \cref{exp:strong,exp:weak}.
Lastly, \cref{exp:composition} looks at the composition of the running time of
the algorithms.

Throughout this section, our weighted reservoir sampling algorithm with
single-pivot selection is referred to as \emph{ours}, its version with
multi-pivot selection with 8 pivots as \emph{ours-8}, and the centralized
gathering algorithm as \emph{gather}.

\subsection{Experimental Setup}\label{exp:setup}

We used C++17 for the implementation and MPI for communication between the
PEs.  The code was compiled with the GNU C++ compiler \texttt{g++} in version 8.3.0 with
optimization flags \texttt{-O3 -march=native} and run with OpenMPI 4.0.
The experiments were run on ForHLR~II, a general-purpose high-performance computing
cluster located at Karlsruhe Institute of Technology, using up to 256 nodes.
Each node is equipped with two deca-core Intel Xeon E5-2660 v3 processors for a
total of 20 cores per node, and 64\,GiB of DDR4 main memory.  We use one MPI
process (PE) per core, \ie 20 PEs per node, for a total of up to 5120 PEs.  All nodes
are attached to an InfiniBand 4X EDR interconnect using an InfiniBand FDR
adapter \cite{forhlr2}.  Our implementation is licensed under the GPLv3 and
available at \break\url{https://github.com/lorenzhs/reservoir}.

We use Intel's Math Kernel Library 2019~\cite{intel-mkl} for fast random number
generation using a Mersenne Twister~\cite{MatNis98}.  The local reservoirs are
implemented as B+ trees, based on an implementation of Bingmann~\cite{tlx}
augmented with join/split operations (see, \eg \cite{MehSanPar}) from an
implementation by Akhremtsev~\cite{akhremtsev2016tree} and rank/select support
(\eg \cite{MehSanPar}).

Each experiment was run ten times, and each run lasted 30 seconds, completing as
many mini-batches as possible in that time.  Input generation is not included in
the reported times.  We use uniformly random floating-point weights from the
range 0..100 as inputs.  Preliminary experiments with skewed weights -- normally
distributed with the mean increasing based on the mini-batch sequence number and the PEs' ranks --
showed no significant differences in running time.  Speedups are reported relative
to our algorithm with single-pivot selection \emph{(``ours'')} on 1 node ($p=20$ cores).
Based on preliminary experiments, we chose $d=8$ as the number of pivots used in
the multi-pivot selection algorithm of \cref{prel:sel:general}.

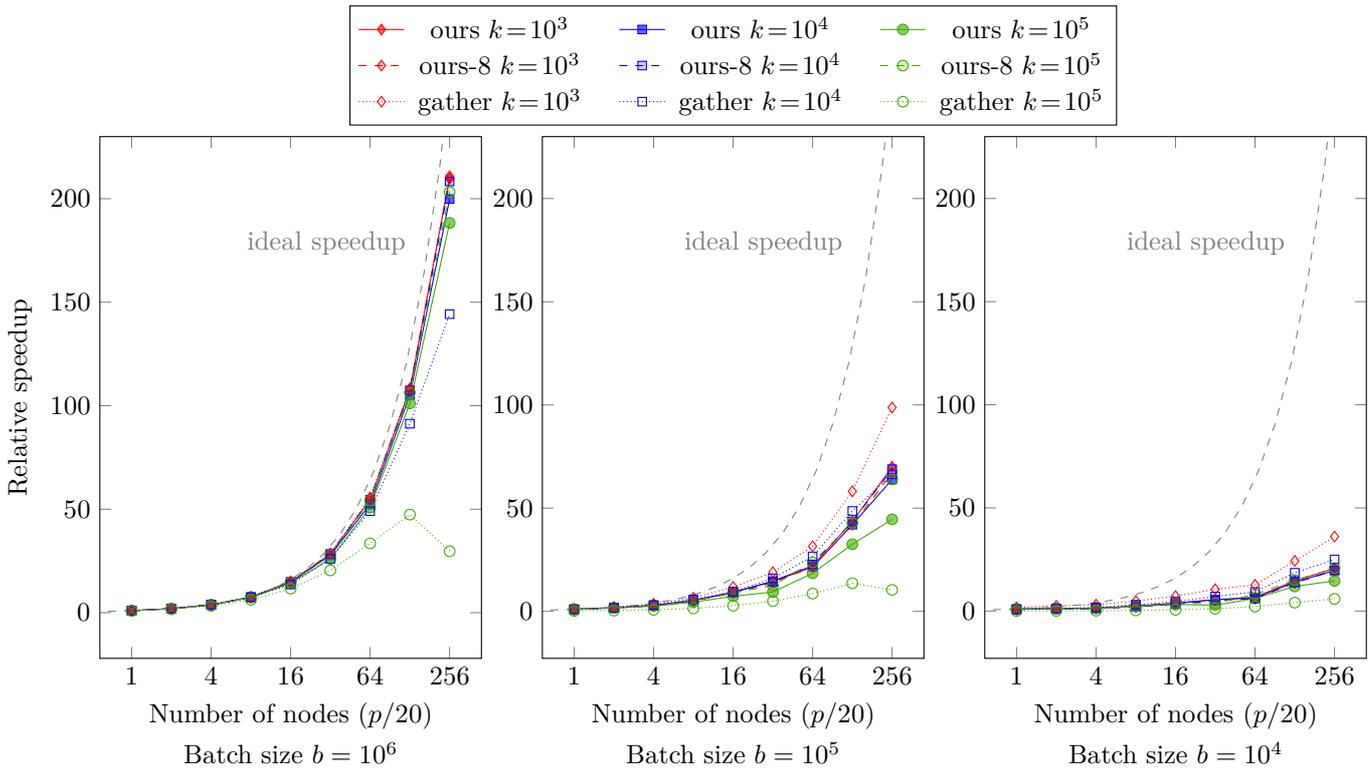
\begin{figure*}[t]
\hspace*{-6.5em}
\begin{tikzpicture}
  \pgfplotsset{every axis title/.append style={at={(0.5,-0.25)}}}
  \begin{groupplot}[
    group style = {group size = 3 by 1, horizontal sep = 8mm},
    xmode=log,
    log basis x=2,
    log ticks with fixed point,
    xlabel={Number of nodes ($p/20$)},
    legend pos=north west,
    width=6.6cm, %
    height=8.5cm,
    domain=0.5:512,
    cycle list name=mylist,
    ymax=230,
  ]
\nextgroupplot[title={Batch size $b=10^6$}, ylabel={Relative speedup},
legend style = {
  column sep=3pt,
  legend columns = 3,
  legend to name = legendweak,
  /tikz/column 2/.style={column sep=12pt},
  /tikz/column 4/.style={column sep=12pt}
}]
  \addplot coordinates { (1,1.0) (2,1.97198) (4,3.8021) (8,7.52469) (16,14.8619) (32,28.2409) (64,54.5395) (128,107.841) (256,209.805) };
  \addlegendentry{ours $k\!=\!10^3$};
  \addplot coordinates { (1,1.0) (2,1.96585) (4,3.77906) (8,7.41025) (16,14.6525) (32,27.7347) (64,52.4111) (128,105.096) (256,199.84) };
  \addlegendentry{ours $k\!=\!10^4$};
  \addplot coordinates { (1,1.0) (2,1.93612) (4,3.74939) (8,7.30577) (16,14.2286) (32,25.9694) (64,50.7726) (128,101.124) (256,188.28) };
  \addlegendentry{ours $k\!=\!10^5$};
  \addplot coordinates { (1,0.999468) (2,1.97279) (4,3.81477) (8,7.53796) (16,14.9065) (32,28.313) (64,54.9338) (128,107.98) (256,209.943) };
  \addlegendentry{ours-8 $k\!=\!10^3$};
  \addplot coordinates { (1,1.00002) (2,1.976) (4,3.8297) (8,7.53071) (16,14.9436) (32,28.3877) (64,54.5379) (128,107.401) (256,208.325) };
  \addlegendentry{ours-8 $k\!=\!10^4$};
  \addplot coordinates { (1,1.00217) (2,1.95544) (4,3.83838) (8,7.50476) (16,14.7444) (32,27.6536) (64,54.2219) (128,106.321) (256,203.229) };
  \addlegendentry{ours-8 $k\!=\!10^5$};
  \addplot coordinates { (1,0.99834) (2,1.98156) (4,3.8406) (8,7.60387) (16,15.0625) (32,28.6288) (64,55.6671) (128,108.428) (256,210.879) };
  \addlegendentry{gather $k\!=\!10^3$};
  \addplot coordinates { (1,0.987101) (2,1.96349) (4,3.82387) (8,7.49477) (16,13.6136) (32,26.0629) (64,49.0622) (128,91.2965) (256,144.233) };
  \addlegendentry{gather $k\!=\!10^4$};
  \addplot coordinates { (1,0.849084) (2,1.66449) (4,3.24181) (8,6.21844) (16,11.6459) (32,20.4279) (64,33.4556) (128,47.3937) (256,29.6756) };
  \addlegendentry{gather $k\!=\!10^5$};

  \addplot[overlay,gray,dashed,update limits=false]{x} node[left,pos=0.35] {ideal speedup\kern0.5em};
  \nextgroupplot[title={Batch size $b=10^5$}]
  \addplot coordinates { (1,1.0) (2,1.78431) (4,2.87778) (8,5.22937) (16,9.21663) (32,14.6174) (64,22.1568) (128,43.509) (256,69.9572) };
  \addlegendentry{ours $k\!=\!10^3$};
  \addplot coordinates { (1,1.0) (2,1.71827) (4,2.84903) (8,5.24155) (16,9.16976) (32,14.5124) (64,21.4919) (128,41.964) (256,63.8581) };
  \addlegendentry{ours $k\!=\!10^4$};
  \addplot coordinates { (1,1.0) (2,1.58761) (4,2.52302) (8,4.44489) (16,7.30533) (32,9.3584) (64,18.4219) (128,32.5041) (256,44.5561) };
  \addlegendentry{ours $k\!=\!10^5$};
  \addplot coordinates { (1,0.953833) (2,1.67373) (4,2.74447) (8,4.9593) (16,8.78725) (32,13.8655) (64,21.5389) (128,41.9512) (256,66.9647) };
  \addlegendentry{ours-8 $k\!=\!10^3$};
  \addplot coordinates { (1,0.957578) (2,1.64433) (4,2.82218) (8,5.18607) (16,9.17889) (32,14.2668) (64,22.7132) (128,43.5674) (256,68.8819) };
  \addlegendentry{ours-8 $k\!=\!10^4$};
  \addplot coordinates { (1,1.00228) (2,1.65824) (4,2.86406) (8,5.16233) (16,8.96895) (32,12.521) (64,23.7092) (128,42.8687) (256,63.8969) };
  \addlegendentry{ours-8 $k\!=\!10^5$};
  \addplot coordinates { (1,1.07382) (2,2.09115) (4,3.63459) (8,6.56122) (16,11.8178) (32,18.9268) (64,31.5051) (128,58.1536) (256,98.7853) };
  \addlegendentry{gather $k\!=\!10^3$};
  \addplot coordinates { (1,0.889663) (2,1.67046) (4,3.05547) (8,5.78962) (16,9.54066) (32,15.9876) (64,26.6101) (128,48.6192) (256,66.3357) };
  \addlegendentry{gather $k\!=\!10^4$};
  \addplot coordinates { (1,0.193007) (2,0.378354) (4,0.734243) (8,1.43499) (16,2.70257) (32,4.94745) (64,8.61951) (128,13.604) (256,10.3948) };
  \addlegendentry{gather $k\!=\!10^5$};

  \legend{};
  \addplot[overlay,gray,dashed,update limits=false]{x} node[left,pos=0.35] {ideal speedup\kern0.7em};
  \nextgroupplot[title={Batch size $b=10^4$}]
  \addplot coordinates { (1,1.0) (2,1.24267) (4,1.34435) (8,2.39364) (16,3.79805) (32,5.53814) (64,6.70825) (128,14.6383) (256,20.6978) };
  \addlegendentry{ours $k\!=\!10^3$};
  \addplot coordinates { (1,1.0) (2,1.23576) (4,1.32363) (8,2.36054) (16,3.76536) (32,5.47028) (64,6.31521) (128,14.0007) (256,19.6713) };
  \addlegendentry{ours $k\!=\!10^4$};
  \addplot coordinates { (1,1.0) (2,1.20127) (4,1.30264) (8,2.18618) (16,3.1683) (32,2.81778) (64,6.12075) (128,11.9266) (256,14.6114) };
  \addlegendentry{ours $k\!=\!10^5$};
  \addplot coordinates { (1,0.791845) (2,0.974322) (4,1.181) (8,2.1046) (16,3.38245) (32,5.0971) (64,6.09704) (128,13.6151) (256,19.5769) };
  \addlegendentry{ours-8 $k\!=\!10^3$};
  \addplot coordinates { (1,0.792963) (2,0.969364) (4,1.1945) (8,2.12128) (16,3.44681) (32,5.13962) (64,6.05143) (128,13.8685) (256,19.7199) };
  \addlegendentry{ours-8 $k\!=\!10^4$};
  \addplot coordinates { (1,0.857595) (2,1.05025) (4,1.35942) (8,2.32632) (16,3.56575) (32,3.81579) (64,8.01086) (128,14.9518) (256,20.7404) };
  \addlegendentry{ours-8 $k\!=\!10^5$};
  \addplot coordinates { (1,1.73501) (2,2.52786) (4,3.04312) (8,4.8478) (16,7.23034) (32,10.5405) (64,12.6545) (128,24.2898) (256,36.0793) };
  \addlegendentry{gather $k\!=\!10^3$};
  \addplot coordinates { (1,0.679059) (2,1.15693) (4,1.76064) (8,3.07995) (16,4.52755) (32,7.05437) (64,9.26404) (128,18.54) (256,25.0178) };
  \addlegendentry{gather $k\!=\!10^4$};
  \addplot coordinates { (1,0.0411375) (2,0.08094) (4,0.157211) (8,0.31199) (16,0.592804) (32,1.10714) (64,2.14964) (128,4.05801) (256,5.87189) };
  \addlegendentry{gather $k\!=\!10^5$};

  \legend{}
  \addplot[overlay,gray,dashed,update limits=false]{x} node[left,pos=0.35] {ideal speedup\kern0.7em};
\end{groupplot}

\hidelinks
\node at ($(group c2r1) + (0,4.4)$) {\ref{legendweak}};
\restorelinks

\end{tikzpicture}
\caption{Weak scaling with different batch and sample sizes. Speedups are
  relative to our algorithm with single-pivot selection (\emph{ours}) for the same sample size
  on 1 node (20 cores).}
\label{fig:weak}
\end{figure*}

\subsection{Weak Scaling}\label{exp:weak}

The results of our weak scaling experiments are shown in \cref{fig:weak}, with
three plots for per-PE mini-batch sizes $b$ -- from left to right -- $10^6$,
$10^5$, and $10^4$, each with sample sizes $k$ of $10^3$, $10^4$, and $10^5$
items.  Speedups are shown relative to our algorithm (\emph{ours}) for the same
sample size on a single node ($p=20$).

We can see that our algorithm shows good scaling across the board, and, as
expected, smaller samples achieve slightly better speedups than larger ones.  The
causes for this lie in the $\Ohsmash{\log^2(kp)}$ latency of the selection
algorithm and increased local processing time due to larger local reservoirs.
Using multiple pivots for selection (\emph{ours-8}, semi-filled marks with
dashed lines) is especially beneficial for larger sample sizes (in blue and green), where
it reduces average recursion depth by a factor of around 2.5 -- from 7.3 to 2.7
for $k=10^5$ and from 4.3 to 1.8 for $k=10^4$ -- compared to a much smaller
improvement from 1.9 to 1.1 for $k=10^3$ (in red), where the average recursion depth is
already very low when using a single pivot.  This results in selection running
time improvements of up to $35\,\%$ for $k=10^5$ and around $20\,\%$ for $k=10^4$, with
no significant improvement for $k=10^3$.  Because local processing is a
significant part of overall processing time (refer to the running time
composition analysis of \cref{exp:composition} for details), the actual overall
running time improvement is only around $8\,\%$ ($k=10^5$ and $b=10^6$).

We also see clearly that the centralized algorithm (\emph{gather}, hollow marks
with dotted lines) performs well only when the sample size is very small
($k=10^3$, in red), where few candidates need to be gathered per batch.  It
begins struggling even with $k=10^4$ for large batch sizes.  For larger sample
sizes ($k=10^5$), it performs badly regardless of batch size.  All algorithms
achieve better---and in the case of our algorithm, near-optimal---speedups for
large batch sizes, as communication overhead is much less noticeable than for
small batches, where local processing is fast.

Overall, \emph{ours-8} is the most consistently fast algorithm, and should be
preferred over the version with single-pivot selection (\emph{ours}).

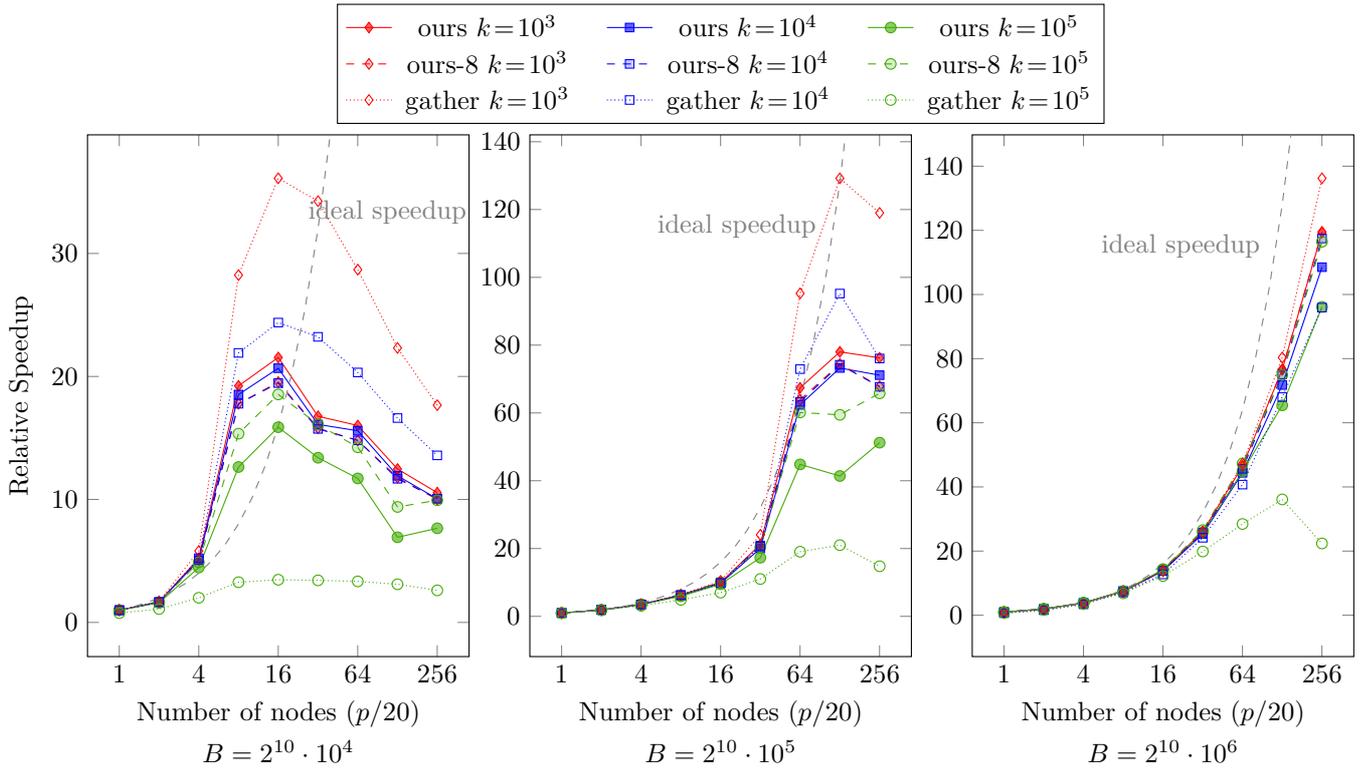
\begin{figure*}[p]
\hspace*{-5.4em}
\begin{tikzpicture}
  \pgfplotsset{every axis title/.append style={at={(0.5,-0.25)}}}
  \begin{groupplot}[
    group style = {group size = 3 by 1, horizontal sep = 8mm},
    xmode=log,
    log basis x=2,
    log ticks with fixed point,
    xlabel={Number of nodes ($p/20$)},
    legend pos=north west,
    width=6.6cm, %
    height=8.5cm,
    domain=1:256,
    cycle list name=mylist,
  ]
\nextgroupplot[title = {$B=2^{10}\cdot 10^4$}, ylabel={Relative Speedup},
legend style = {
  column sep=3pt,
  legend columns = 3,
  legend to name = legendstrong,
  /tikz/column 2/.style={column sep=12pt},
  /tikz/column 4/.style={column sep=12pt}
}]
  \addplot coordinates { (1,1.0) (2,1.65933) (4,5.211) (8,19.2334) (16,21.5412) (32,16.7619) (64,16.0107) (128,12.4737) (256,10.5509) };
  \addlegendentry{ours $k\!=\!10^3$};
  \addplot coordinates { (1,1.0) (2,1.64421) (4,5.05682) (8,18.5291) (16,20.6652) (32,16.0981) (64,15.5834) (128,11.9221) (256,10.047) };
  \addlegendentry{ours $k\!=\!10^4$};
  \addplot coordinates { (1,1.0) (2,1.59601) (4,4.46095) (8,12.6367) (16,15.874) (32,13.3887) (64,11.7092) (128,6.92324) (256,7.65443) };
  \addlegendentry{ours $k\!=\!10^5$};
  \addplot coordinates { (1,0.996752) (2,1.64603) (4,5.08871) (8,17.7864) (16,19.5403) (32,15.7892) (64,14.7891) (128,11.7839) (256,9.99564) };
  \addlegendentry{ours-8 $k\!=\!10^3$};
  \addplot coordinates { (1,1.00032) (2,1.65127) (4,5.07336) (8,17.7814) (16,19.4609) (32,15.7348) (64,14.8059) (128,11.6768) (256,10.0642) };
  \addlegendentry{ours-8 $k\!=\!10^4$};
  \addplot coordinates { (1,1.00304) (2,1.63673) (4,4.89235) (8,15.3462) (16,18.5366) (32,16.0981) (64,14.2429) (128,9.39011) (256,9.93053) };
  \addlegendentry{ours-8 $k\!=\!10^5$};
  \addplot coordinates { (1,1.00495) (2,1.71299) (4,5.81184) (8,28.2409) (16,36.1083) (32,34.2519) (64,28.6864) (128,22.3183) (256,17.6656) };
  \addlegendentry{gather $k\!=\!10^3$};
  \addplot coordinates { (1,0.984594) (2,1.65476) (4,5.21865) (8,21.9127) (16,24.3797) (32,23.2193) (64,20.3225) (128,16.6172) (256,13.5828) };
  \addlegendentry{gather $k\!=\!10^4$};
  \addplot coordinates { (1,0.758106) (2,1.08791) (4,2.0025) (8,3.26877) (16,3.47225) (32,3.42022) (64,3.33593) (128,3.10466) (256,2.60774) };
  \addlegendentry{gather $k\!=\!10^5$};

  \addplot[overlay,gray,dashed,update limits=false]{x} node[right,pos=0.13] {\kern-0.8em ideal speedup};

\nextgroupplot[title = {$B=2^{10}\cdot 10^5$}]
  \addplot coordinates { (1,1.0) (2,1.90876) (4,3.52324) (8,6.29956) (16,10.0089) (32,21.1932) (64,67.405) (128,78.0135) (256,76.2506) };
  \addlegendentry{ours $k\!=\!10^3$};
  \addplot coordinates { (1,1.0) (2,1.89662) (4,3.48782) (8,6.16562) (16,9.7454) (32,20.0575) (64,62.3827) (128,73.2555) (256,71.1319) };
  \addlegendentry{ours $k\!=\!10^4$};
  \addplot coordinates { (1,1.0) (2,1.90685) (4,3.47712) (8,5.91729) (16,9.45981) (32,17.3036) (64,44.8233) (128,41.4006) (256,51.2369) };
  \addlegendentry{ours $k\!=\!10^5$};
  \addplot coordinates { (1,0.998822) (2,1.9116) (4,3.53133) (8,6.31307) (16,9.99748) (32,20.8983) (64,63.8609) (128,74.4162) (256,67.8839) };
  \addlegendentry{ours-8 $k\!=\!10^3$};
  \addplot coordinates { (1,1.00224) (2,1.90433) (4,3.52803) (8,6.27063) (16,9.97345) (32,20.8315) (64,63.2854) (128,74.2132) (256,67.7157) };
  \addlegendentry{ours-8 $k\!=\!10^4$};
  \addplot coordinates { (1,1.00059) (2,1.91641) (4,3.54113) (8,6.17515) (16,10.0583) (32,20.3797) (64,60.1432) (128,59.4413) (256,65.7782) };
  \addlegendentry{ours-8 $k\!=\!10^5$};
  \addplot coordinates { (1,0.956) (2,1.87672) (4,3.5249) (8,6.4147) (16,10.4527) (32,24.0419) (64,95.2504) (128,129.205) (256,118.996) };
  \addlegendentry{gather $k\!=\!10^3$};
  \addplot coordinates { (1,0.955721) (2,1.86214) (4,3.49402) (8,6.25705) (16,9.8213) (32,20.6064) (64,72.9266) (128,95.219) (256,76.0595) };
  \addlegendentry{gather $k\!=\!10^4$};
  \addplot coordinates { (1,0.959167) (2,1.78332) (4,3.10323) (8,4.8659) (16,7.00749) (32,11.0185) (64,19.0596) (128,21.003) (256,14.7797) };
  \addlegendentry{gather $k\!=\!10^5$};

  \legend{} %
  \addplot[overlay,gray,dashed,update limits=false]{x} node[left,pos=0.45] {ideal speedup\kern0.3em};

\nextgroupplot[title = {$B=2^{10}\cdot 10^6$}]
  \addplot coordinates { (1,1.0) (2,1.9656) (4,3.85999) (8,7.54123) (16,14.0942) (32,26.0564) (64,46.4773) (128,76.5713) (256,119.33) };
  \addlegendentry{ours $k\!=\!10^3$};
  \addplot coordinates { (1,1.0) (2,1.93305) (4,3.81295) (8,7.49019) (16,13.7861) (32,25.3309) (64,44.3815) (128,71.7778) (256,108.517) };
  \addlegendentry{ours $k\!=\!10^4$};
  \addplot coordinates { (1,1.0) (2,1.96854) (4,3.89196) (8,7.44938) (16,14.2407) (32,25.8157) (64,44.8131) (128,65.4422) (256,96.1365) };
  \addlegendentry{ours $k\!=\!10^5$};
  \addplot coordinates { (1,1.00399) (2,1.97682) (4,3.83236) (8,7.49941) (16,14.0352) (32,26.1003) (64,46.5995) (128,76.6444) (256,119.527) };
  \addlegendentry{ours-8 $k\!=\!10^3$};
  \addplot coordinates { (1,0.999027) (2,1.9445) (4,3.80693) (8,7.51255) (16,13.8728) (32,25.7196) (64,45.6497) (128,75.073) (256,117.452) };
  \addlegendentry{ours-8 $k\!=\!10^4$};
  \addplot coordinates { (1,1.00046) (2,1.98986) (4,3.906) (8,7.51481) (16,14.3889) (32,26.5315) (64,47.3327) (128,74.9248) (256,116.436) };
  \addlegendentry{ours-8 $k\!=\!10^5$};
  \addplot coordinates { (1,0.658675) (2,1.58548) (4,3.45851) (8,7.15161) (16,13.6948) (32,25.896) (64,47.2987) (128,80.3549) (256,136.215) };
  \addlegendentry{gather $k\!=\!10^3$};
  \addplot coordinates { (1,0.663312) (2,1.57502) (4,3.40442) (8,7.10324) (16,12.7024) (32,24.144) (64,40.7213) (128,68.0176) (256,95.9063) };
  \addlegendentry{gather $k\!=\!10^4$};
  \addplot coordinates { (1,0.710004) (2,1.67301) (4,3.53722) (8,6.83042) (16,12.2319) (32,19.8704) (64,28.4434) (128,36.1031) (256,22.3728) };
  \addlegendentry{gather $k\!=\!10^5$};

  \legend{} %
  \addplot[overlay,gray,dashed,update limits=false]{x} node[left,pos=0.45] {ideal speedup\kern0.25em};
\end{groupplot}

\hidelinks
\node at ($(group c2r1) + (0,4.4)$) {\ref{legendstrong}};
\restorelinks
\end{tikzpicture}
\caption{Strong scaling, speedups relative to our algorithm with single-pivot
  selection (\emph{ours}) for the same sample size on 1 node (20 cores).}
\label{fig:strong_speedup}
\end{figure*}

\begin{figure*}[p]
\hspace*{-5.4em}
\begin{tikzpicture}
  \pgfplotsset{every axis title/.append style={at={(0.5,-0.25)}}}
  \begin{groupplot}[
    group style = {group size = 3 by 1, horizontal sep = 8mm},
    xmode=log,
    log basis x=2,
    log ticks with fixed point,
    xlabel={Number of nodes ($p/20$)},
    legend pos=north west,
    width=6.6cm, %
    height=8.5cm,
    domain=1:256,
    cycle list name=mylist,
  ]
\nextgroupplot[title = {$B=2^{10}\cdot 10^4$}, ylabel={Throughput per PE (items/s)},
]
  \addplot coordinates { (1,2.82371e+08) (2,2.34273e+08) (4,3.67858e+08) (8,6.78869e+08) (16,3.80163e+08) (32,1.47909e+08) (64,7.064e+07) (128,2.75173e+07) (256,1.16378e+07) };
  \addlegendentry{selection=ams-select,sample\_size=1000};
  \addplot coordinates { (1,2.80159e+08) (2,2.30321e+08) (4,3.54179e+08) (8,6.4889e+08) (16,3.61848e+08) (32,1.40939e+08) (64,6.82164e+07) (128,2.60944e+07) (256,1.09951e+07) };
  \addlegendentry{selection=ams-select,sample\_size=10000};
  \addplot coordinates { (1,2.73121e+08) (2,2.17953e+08) (4,3.04595e+08) (8,4.3142e+08) (16,2.7097e+08) (32,1.14273e+08) (64,4.99692e+07) (128,1.47725e+07) (256,8.16637e+06) };
  \addlegendentry{selection=ams-select,sample\_size=100000};
  \addplot coordinates { (1,2.81454e+08) (2,2.32396e+08) (4,3.59226e+08) (8,6.27792e+08) (16,3.44849e+08) (32,1.39325e+08) (64,6.52502e+07) (128,2.59955e+07) (256,1.10253e+07) };
  \addlegendentry{selection=ams-multi-8,sample\_size=1000};
  \addplot coordinates { (1,2.8025e+08) (2,2.31309e+08) (4,3.55337e+08) (8,6.22702e+08) (16,3.40759e+08) (32,1.37758e+08) (64,6.48125e+07) (128,2.55575e+07) (256,1.1014e+07) };
  \addlegendentry{selection=ams-multi-8,sample\_size=10000};
  \addplot coordinates { (1,2.73951e+08) (2,2.23514e+08) (4,3.34051e+08) (8,5.23923e+08) (16,3.16422e+08) (32,1.37398e+08) (64,6.07819e+07) (128,2.00362e+07) (256,1.05947e+07) };
  \addlegendentry{selection=ams-multi-8,sample\_size=100000};
  \addplot coordinates { (1,2.83768e+08) (2,2.4185e+08) (4,4.10274e+08) (8,9.96803e+08) (16,6.37246e+08) (32,3.02242e+08) (64,1.26565e+08) (128,4.92346e+07) (256,1.94854e+07) };
  \addlegendentry{selection=gather,sample\_size=1000};
  \addplot coordinates { (1,2.75843e+08) (2,2.31799e+08) (4,3.65514e+08) (8,7.67379e+08) (16,4.26888e+08) (32,2.03285e+08) (64,8.89616e+07) (128,3.63707e+07) (256,1.48646e+07) };
  \addlegendentry{selection=gather,sample\_size=10000};
  \addplot coordinates { (1,2.07055e+08) (2,1.48566e+08) (4,1.36731e+08) (8,1.11596e+08) (16,5.92716e+07) (32,2.91917e+07) (64,1.42361e+07) (128,6.62463e+06) (256,2.78215e+06) };
  \addlegendentry{selection=gather,sample\_size=100000};

  \legend{} %
\nextgroupplot[title = {$B=2^{10}\cdot 10^5$}]
  \addplot coordinates { (1,3.61083e+08) (2,3.4461e+08) (4,3.18045e+08) (8,2.84332e+08) (16,2.25878e+08) (32,2.3914e+08) (64,3.80293e+08) (128,2.20073e+08) (256,1.0755e+08) };
  \addlegendentry{selection=ams-select,sample\_size=1000};
  \addplot coordinates { (1,3.58899e+08) (2,3.40349e+08) (4,3.12944e+08) (8,2.76604e+08) (16,2.18601e+08) (32,2.24957e+08) (64,3.49829e+08) (128,2.05401e+08) (256,9.97232e+07) };
  \addlegendentry{selection=ams-select,sample\_size=10000};
  \addplot coordinates { (1,3.43089e+08) (2,3.27111e+08) (4,2.98241e+08) (8,2.5377e+08) (16,2.02848e+08) (32,1.85522e+08) (64,2.40288e+08) (128,1.10969e+08) (256,6.86673e+07) };
  \addlegendentry{selection=ams-select,sample\_size=100000};
  \addplot coordinates { (1,3.60657e+08) (2,3.45124e+08) (4,3.18775e+08) (8,2.84943e+08) (16,2.2562e+08) (32,2.35812e+08) (64,3.60297e+08) (128,2.09925e+08) (256,9.57489e+07) };
  \addlegendentry{selection=ams-multi-8,sample\_size=1000};
  \addplot coordinates { (1,3.59704e+08) (2,3.4173e+08) (4,3.16552e+08) (8,2.81315e+08) (16,2.23716e+08) (32,2.33637e+08) (64,3.54891e+08) (128,2.08086e+08) (256,9.49337e+07) };
  \addlegendentry{selection=ams-multi-8,sample\_size=10000};
  \addplot coordinates { (1,3.43293e+08) (2,3.28749e+08) (4,3.03731e+08) (8,2.64829e+08) (16,2.15682e+08) (32,2.18501e+08) (64,3.22414e+08) (128,1.59326e+08) (256,8.81554e+07) };
  \addlegendentry{selection=ams-multi-8,sample\_size=100000};
  \addplot coordinates { (1,3.45195e+08) (2,3.38826e+08) (4,3.18196e+08) (8,2.89529e+08) (16,2.35894e+08) (32,2.71284e+08) (64,5.37394e+08) (128,3.64482e+08) (256,1.67842e+08) };
  \addlegendentry{selection=gather,sample\_size=1000};
  \addplot coordinates { (1,3.43007e+08) (2,3.3416e+08) (4,3.135e+08) (8,2.80706e+08) (16,2.20303e+08) (32,2.31112e+08) (64,4.08957e+08) (128,2.66984e+08) (256,1.06631e+08) };
  \addlegendentry{selection=gather,sample\_size=10000};
  \addplot coordinates { (1,3.2908e+08) (2,3.0592e+08) (4,2.66171e+08) (8,2.0868e+08) (16,1.50262e+08) (32,1.18135e+08) (64,1.02175e+08) (128,5.6296e+07) (256,1.98076e+07) };
  \addlegendentry{selection=gather,sample\_size=100000};

  \legend{} %

\nextgroupplot[title = {$B=2^{10}\cdot 10^6$}]
  \addplot coordinates { (1,3.69311e+08) (2,3.62957e+08) (4,3.56384e+08) (8,3.48132e+08) (16,3.25323e+08) (32,3.00716e+08) (64,2.68196e+08) (128,2.20927e+08) (256,1.72147e+08) };
  \addlegendentry{selection=ams-select,sample\_size=1000};
  \addplot coordinates { (1,3.70523e+08) (2,3.58119e+08) (4,3.53196e+08) (8,3.46911e+08) (16,3.19252e+08) (32,2.93302e+08) (64,2.56943e+08) (128,2.07776e+08) (256,1.57062e+08) };
  \addlegendentry{selection=ams-select,sample\_size=10000};
  \addplot coordinates { (1,3.4325e+08) (2,3.37851e+08) (4,3.33979e+08) (8,3.19625e+08) (16,3.05508e+08) (32,2.76914e+08) (64,2.40345e+08) (128,1.75493e+08) (256,1.28902e+08) };
  \addlegendentry{selection=ams-select,sample\_size=100000};
  \addplot coordinates { (1,3.70784e+08) (2,3.65031e+08) (4,3.53832e+08) (8,3.46201e+08) (16,3.2396e+08) (32,3.01223e+08) (64,2.68901e+08) (128,2.21138e+08) (256,1.72433e+08) };
  \addlegendentry{selection=ams-multi-8,sample\_size=1000};
  \addplot coordinates { (1,3.70162e+08) (2,3.60239e+08) (4,3.52639e+08) (8,3.47946e+08) (16,3.21263e+08) (32,2.97803e+08) (64,2.64286e+08) (128,2.17314e+08) (256,1.69995e+08) };
  \addlegendentry{selection=ams-multi-8,sample\_size=10000};
  \addplot coordinates { (1,3.4341e+08) (2,3.41509e+08) (4,3.35184e+08) (8,3.22433e+08) (16,3.08687e+08) (32,2.84592e+08) (64,2.53859e+08) (128,2.00922e+08) (256,1.5612e+08) };
  \addlegendentry{selection=ams-multi-8,sample\_size=100000};
  \addplot coordinates { (1,2.43256e+08) (2,2.92766e+08) (4,3.19316e+08) (8,3.30146e+08) (16,3.16102e+08) (32,2.98865e+08) (64,2.72936e+08) (128,2.31843e+08) (256,1.96506e+08) };
  \addlegendentry{selection=gather,sample\_size=1000};
  \addplot coordinates { (1,2.45772e+08) (2,2.91791e+08) (4,3.15354e+08) (8,3.28989e+08) (16,2.94157e+08) (32,2.79559e+08) (64,2.35752e+08) (128,1.96891e+08) (256,1.3881e+08) };
  \addlegendentry{selection=gather,sample\_size=10000};
  \addplot coordinates { (1,2.43709e+08) (2,2.87131e+08) (4,3.03537e+08) (8,2.93067e+08) (16,2.62414e+08) (32,2.1314e+08) (64,1.5255e+08) (128,9.68156e+07) (256,2.99979e+07) };
  \addlegendentry{selection=gather,sample\_size=100000};

  \legend{} %
  \end{groupplot}
\end{tikzpicture}
\caption{Strong scaling, items per PE per second (excluding input generation).
  Legend as in \cref{fig:strong_speedup}.}
\label{fig:strong_tpp}
\end{figure*}
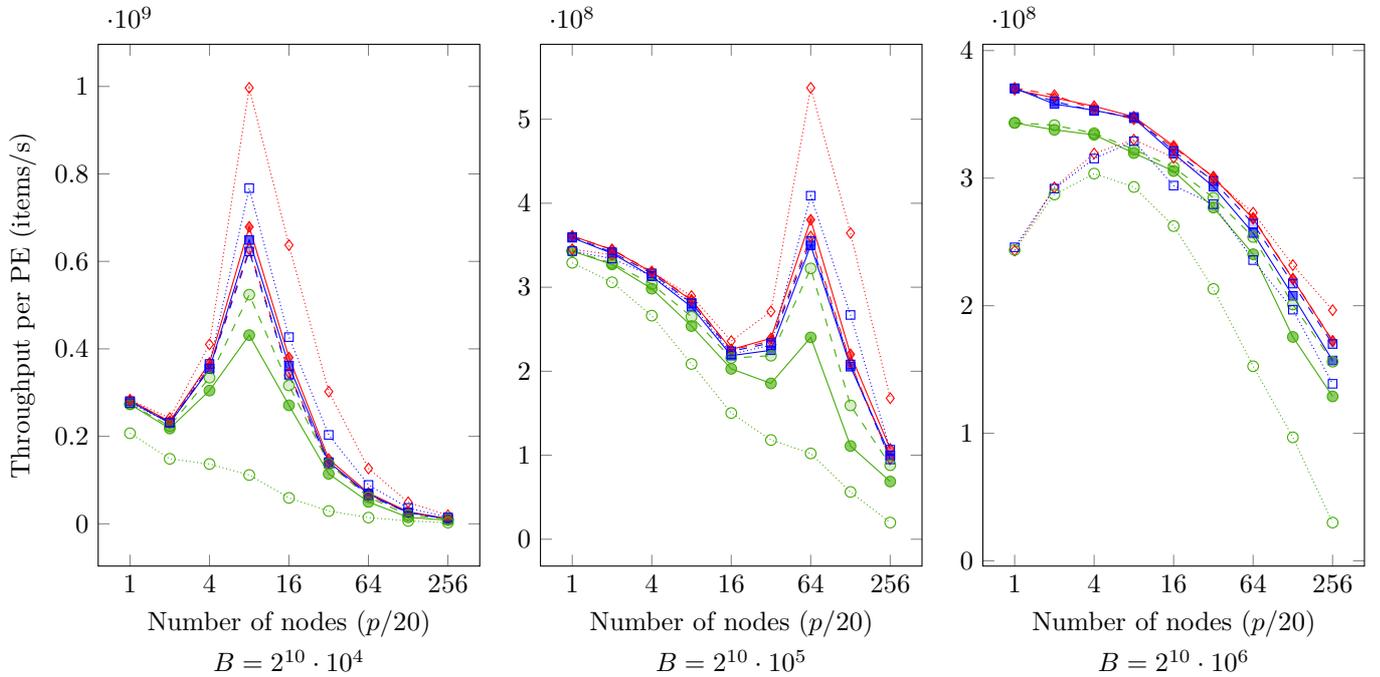

\makeatletter
\newcommand\resetstackedplots{
\makeatletter
\pgfplots@stacked@isfirstplottrue
\makeatother
\addplot [forget plot,draw=none] coordinates{(1,0) (2,0) (4,0) (8,0) (16,0) (32,0) (64,0) (128,0) (256,0)};
}
\makeatother

\begin{figure*}[tb!]
  \pgfplotsset{
    width=7.05cm,
    height=6.1cm,
    log basis x=2,
    log ticks with fixed point,
    ybar stacked,
    enlarge x limits=0.1,
    enlarge y limits=0.05,
    ylabel={Fraction of longer running time},
    xlabel={Number of nodes ($p/20$)},
    ymin=0,
    ymax=1,
  }
\begin{subfigure}[t]{0.49\textwidth}
\begin{tikzpicture}
  \begin{semilogxaxis}[bar width=6pt]
    \pgfplotsset{every axis plot post/.append style={mark=none}}

    \addplot[blue,fill=blue!30!white,bar shift=-3.65pt] coordinates { (1,0.933888) (2,0.884162) (4,0.803576) (8,0.689725) (16,0.576697) (32,0.34267) (64,0.0653166) (128,0.0332365) (256,0.0124459) };
    \addplot+[red,fill=red!30!white,bar shift=-3.65pt] coordinates { (1,0.0242214) (2,0.0448662) (4,0.0695019) (8,0.0919743) (16,0.110079) (32,0.176789) (64,0.216415) (128,0.26118) (256,0.174434) };
    \addplot+[black,fill=gray,bar shift=-3.65pt] plot coordinates { (1,0.000174408) (2,0.00102807) (4,0.00255182) (8,0.0052833) (16,0.00847705) (32,0.0192213) (64,0.0326773) (128,0.0563955) (256,0.0360361) };

    \resetstackedplots
    \addplot+[blue,fill=blue!30!white,bar shift=3.65pt,postaction={pattern=north east lines,pattern color=.}] coordinates { (1,0.928367) (2,0.868204) (4,0.788949) (8,0.679884) (16,0.571156) (32,0.33525) (64,0.0647421) (128,0.0329478) (256,0.0123344) };
    \addplot+[violet!80!black,bar shift=3.65pt,postaction={pattern=north east lines,pattern color=black}] coordinates { (1,0.00915461) (2,0.0146289) (4,0.0221827) (8,0.0286768) (16,0.0383763) (32,0.0643703) (64,0.0658185) (128,0.0599983) (256,0.0386747) };
    \addplot+[red,fill=red!30!white,bar shift=3.65pt,postaction={pattern=north east lines,pattern color=.}] coordinates { (1,0.0623599) (2,0.116952) (4,0.188498) (8,0.290866) (16,0.389647) (32,0.599017) (64,0.867565) (128,0.905252) (256,0.94769) };

\end{semilogxaxis}
  \begin{customlegend}[
    legend entries={insert,select,threshold,gather,Algorithm:,\emph{ours-8},\emph{gather}},
    legend style={at={(-0.3,5.2)},anchor=west, column sep=2pt,
      /tikz/column 2/.style={column sep=6pt},
      /tikz/column 4/.style={column sep=6pt},
      /tikz/column 6/.style={column sep=6pt},
      /tikz/column 8/.style={column sep=12pt},
      /tikz/column 10/.style={column sep=6pt},
      /tikz/column 12/.style={column sep=6pt}},
    legend columns=-1]
    \pgfplotsset{legend image code/.code={\draw[#1] (0cm,-0.1cm)rectangle(0.3cm,0.1cm);},}

    \addlegendimage{blue,  fill=blue!30!white}
    \addlegendimage{red,   fill=red!30!white}
    \addlegendimage{black, fill=gray}
    \addlegendimage{black, fill=violet!80!black}
    \addlegendimage{empty legend}
    \addlegendimage{black,fill=white}
    \addlegendimage{black,fill=white,postaction={pattern=north east lines,pattern color=black}}
  \end{customlegend}
\end{tikzpicture}
\caption{Strong scaling, total batch size $B=2^{10}\cdot 10^5$.}\label{fig:comps5}
\end{subfigure}
\hfill
\begin{subfigure}[t]{0.49\textwidth}
\begin{tikzpicture}
  \begin{semilogxaxis}[bar width=6pt]
    \addplot[blue,fill=blue!30!white,bar shift=-3.65pt] coordinates { (1,0.692151) (2,0.809665) (4,0.850724) (8,0.824682) (16,0.74849) (32,0.630884) (64,0.484817) (128,0.351175) (256,0.103765) };
    \addplot+[red,fill=red!30!white,bar shift=-3.65pt] coordinates { (1,0.017475) (2,0.0309119) (4,0.0543924) (8,0.0832572) (16,0.0998655) (32,0.114642) (64,0.110454) (128,0.11892) (256,0.0789868) };
    \addplot+[black,fill=gray,bar shift=-3.65pt] plot coordinates { (1,1.56595e-05) (2,0.000126242) (4,0.000353028) (8,0.000796666) (16,0.0014388) (32,0.00300818) (64,0.0051203) (128,0.0111073) (256,0.00898627) };

    \resetstackedplots
    \addplot+[blue,fill=blue!30!white,bar shift=3.65pt,postaction={pattern=north east lines,pattern color=.}] coordinates { (1,0.942762) (2,0.925816) (4,0.889513) (8,0.825272) (16,0.735732) (32,0.618084) (64,0.476551) (128,0.347109) (256,0.103831) };
    \addplot+[violet!80!black,bar shift=3.65pt,postaction={pattern=north east lines,pattern color=black}] coordinates { (1,0.0371612) (2,0.0333066) (4,0.0343326) (8,0.043204) (16,0.0671908) (32,0.103133) (64,0.153487) (128,0.214608) (256,0.25953) };
    \addplot+[red,fill=red!30!white,bar shift=3.65pt,postaction={pattern=north east lines,pattern color=.}] coordinates { (1,0.0200622) (2,0.0408544) (4,0.0761088) (8,0.131432) (16,0.196922) (32,0.278542) (64,0.369625) (128,0.437856) (256,0.636353) };

\end{semilogxaxis}
\end{tikzpicture}
\caption{Strong scaling, total batch size $B=2^{10}\cdot 10^6$.}\label{fig:comps6}
\end{subfigure}
\begin{subfigure}[b]{0.49\textwidth}
\vspace*{0.5em}
\begin{tikzpicture}
  \begin{semilogxaxis}[bar width=6pt]
    \pgfplotsset{every axis plot post/.append style={mark=none}}

    \addplot[blue,fill=blue!30!white,bar shift=-3.65pt] coordinates { (1,0.144443) (2,0.142425) (4,0.135421) (8,0.13189) (16,0.122193) (32,0.11113) (64,0.096659) (128,0.0763523) (256,0.0290075) };
    \addplot+[red,fill=red!30!white,bar shift=-3.65pt] coordinates { (1,0.0409862) (2,0.0728995) (4,0.102426) (8,0.125254) (16,0.153786) (32,0.239785) (64,0.229395) (128,0.207507) (256,0.113076) };
    \addplot+[black,fill=gray,bar shift=-3.65pt] plot coordinates { (1,0.00346089) (2,0.00913029) (4,0.0152199) (8,0.0176819) (16,0.0222183) (32,0.0416155) (64,0.0352574) (128,0.031709) (256,0.0199159) };

    \resetstackedplots
    \addplot+[blue,fill=blue!30!white,bar shift=3.65pt,postaction={pattern=north east lines,pattern color=.}] coordinates { (1,0.143517) (2,0.140889) (4,0.134082) (8,0.130158) (16,0.120888) (32,0.110333) (64,0.095905) (128,0.0745199) (256,0.0284102) };
    \addplot+[violet!80!black,bar shift=3.65pt,postaction={pattern=north east lines,pattern color=black}] coordinates { (1,0.00413197) (2,0.00611811) (4,0.010877) (8,0.0176184) (16,0.0271943) (32,0.0486541) (64,0.0864551) (128,0.158594) (256,0.241629) };
    \addplot+[red,fill=red!30!white,bar shift=3.65pt,postaction={pattern=north east lines,pattern color=.}] coordinates { (1,0.85002) (2,0.850718) (4,0.85283) (8,0.850047) (16,0.849865) (32,0.839132) (64,0.816002) (128,0.765591) (256,0.729464) };

\end{semilogxaxis}
\end{tikzpicture}
\caption{Weak scaling, per-PE batch size $b=10^5$.}\label{fig:compw5}
\end{subfigure}
\hfill
\begin{subfigure}[b]{0.49\textwidth}
\vspace*{0.5em}
\begin{tikzpicture}
  \begin{semilogxaxis}[bar width=6pt]
    \addplot[blue,fill=blue!30!white,bar shift=-3.65pt] coordinates { (1,0.815674) (2,0.798717) (4,0.774218) (8,0.742214) (16,0.695336) (32,0.60861) (64,0.49813) (128,0.35263) (256,0.110299) };
    \addplot+[red,fill=red!30!white,bar shift=-3.65pt] coordinates { (1,0.0298179) (2,0.0496756) (4,0.0665153) (8,0.0822379) (16,0.0896446) (32,0.123027) (64,0.112773) (128,0.0892376) (256,0.0338816) };
    \addplot+[black,fill=gray,bar shift=-3.65pt] plot coordinates { (1,0.000679448) (2,0.00185119) (4,0.0030025) (8,0.00337806) (16,0.00414374) (32,0.00649994) (64,0.00565053) (128,0.00356965) (256,0.00173952) };

    \resetstackedplots
    \addplot+[blue,fill=blue!30!white,bar shift=3.65pt,postaction={pattern=north east lines,pattern color=.}] coordinates { (1,0.799026) (2,0.782926) (4,0.759888) (8,0.728386) (16,0.682421) (32,0.596998) (64,0.48868) (128,0.346359) (256,0.10876) };
    \addplot+[violet!80!black,bar shift=3.65pt,postaction={pattern=north east lines,pattern color=black}] coordinates { (1,0.00862) (2,0.0127882) (4,0.0202939) (8,0.0325861) (16,0.0549417) (32,0.0955043) (64,0.160821) (128,0.257618) (256,0.289922) };
    \addplot+[red,fill=red!30!white,bar shift=3.65pt,postaction={pattern=north east lines,pattern color=.}] coordinates { (1,0.191896) (2,0.203829) (4,0.21937) (8,0.238604) (16,0.262241) (32,0.30715) (64,0.350213) (128,0.395818) (256,0.601252) };

\end{semilogxaxis}
\end{tikzpicture}
\caption{Weak scaling, per-PE batch size $b= 10^6$.}\label{fig:compw6}
\end{subfigure}
\caption{Composition of running time, normalized to the slower algorithm; strong
  scaling with $B=bp=2^{10}\cdot 10^5$ (top left) and $2^{10}\cdot 10^6$ (top
  right) items; weak scaling with $b=10^5$ (bottom left) and $10^6$
  (bottom right) items per PE; sample size $k=10^5$. Left bars: our algorithm
  using selection with 8 pivots \emph{(ours-8)}, right bars with diagonal lines:
  centralized algorithm \emph{(gather)}.}
\label{fig:composition}
\end{figure*}

\subsection{Strong Scaling}\label{exp:strong}

Our strong scaling experiments use the same sample sizes of
$k=10^3$, $10^4$, and $10^5$ items, while keeping the global batch
size $B=b\cdot p$ fixed at $B=2^{10}\cdot 10^4 \approx 10^7$,
$2^{10}\cdot 10^5 \approx 10^8$, and $2^{10}\cdot 10^6 \approx 10^9$ items.
These sizes are chosen so that the number of PEs $p$ divides them evenly for the
power-of-two numbers of 20-core machines used in our experiments.

\Cref{fig:strong_speedup} shows the relative speedups for a strong scaling
experiment with the above configurations.  We again observe that using multiple
pivots for selection (\emph{ours-8}) provides significant benefits for large sample sizes (green
lines) and does not make much difference for smaller ones (red and blue lines).
The centralized algorithm again works well only for small sample sizes and fails
to provide any significant speedup for large samples.

We also see that as long as the local batch size (\ie input size per PE) is too
large to fit into cache -- more than around $10^5$ items -- speedups increase
well, before abruptly jumping once local processing happens in the CPUs' caches.
For the smallest batch size, this effect even exceeds what would be the ideal
speedup by a significant margin.  This is a classical superlinear speedup due to
larger available cache resources.  Once the data fits into cache, local
processing time -- which previously was a significant factor -- collapses, and
now represents only a very small part of the overall time.  Initially, this
leads to a superlinear speedup.  As communication in the selection process
becomes the dominant factor in overall running time, speedups slowly decline as
the $\Ohsmash{\log^2(kp)}$ messages required for the selection dominate running time
as the number of PEs grows.  \Cref{fig:strong_tpp} shows the throughput per PE,
\ie how many items are processed at every PE per second, and confirms this.  It
clearly shows the momentary advantage of processing inputs that just fit into
cache, but are large enough to keep the fraction of running time spent on
selection low.  Once this advantage passes, the decline in throughput per PE
once again follows the predicted curve, dominated by the communication cost of
selection.

\subsection{Running Time Composition}\label{exp:composition}

\Cref{fig:composition} shows the composition of running times for our weak and
strong scaling experiments with two different batch sizes and $k=10^5$ samples.
Each pair of bars -- our algorithm using selection with 8 pivots
(\emph{ours-8}), and to its right, the centralized gathering algorithm
(\emph{gather}), marked with diagonal lines -- is normalized to the slower of
the two algorithms, which is always \emph{gather} in these experiments.

\Cref{fig:comps5,fig:comps6} present the results for strong scaling.  We can see
that the fraction of time spent on processing the local input declines as
expected, and selection becomes the dominant factor in our algorithm.  In the
centralized algorithm, however, the amount of time spent on gathering the
candidate items grows rapidly, especially for the larger batch size
(\cref{fig:comps6}).  For the smaller batch size, sequential selection dominates
the centralized algorithm's running time when using many nodes, as only
$b=B/p=20\,000$ items are processed per PE and batch when using 256 nodes (5120
PEs), which is much faster than selecting the $k=10^5$ smallest values out of
little more than $k$ candidates (the $10^5$ previously best items plus fewer
than 300 new candidates per mini-batch on average for 128 and 256 nodes in this
experiment).

The results for weak scaling are shown in \cref{fig:compw5,fig:compw6}.  For the
smaller batch size (\cref{fig:compw5}), selection clearly dominates the
gathering algorithm's running time from the beginning.  Our algorithm is consistently more than
twice as fast, and typically up to four times faster.  While this gap shrinks
for 32 and 64 nodes (640 and 1280 PEs, respectively), it grows again as the
centralized algorithm's time spent on gathering the candidates increases, even though on
average, less than $0.1$ new candidates per PE are gathered.  The
results for the larger batch size, shown in \cref{fig:compw6}, are also as
expected.  While our algorithm requires slightly more time for
sequential processing -- the candidates have to be inserted into a B+ tree
instead of stored in an array -- the centralized algorithm's selection and
gathering become unsustainably slow for large numbers of nodes.

\section{Conclusions}\label{concl}

We presented a communication-efficient reservoir sampling algorithm in the
distributed mini-batch model of streaming algorithms.  Our algorithm can handle
weighted as well as uniform inputs, and performs well both in theory and in
practice.  Experimental results show good performance, proving the usefulness of
our model and underlining that communication efficiency is of practical
importance even for node counts in the low hundreds.

A natural avenue for future work would be to study whether our approach could be
applied to sampling from a sliding window, \ie sampling only from the
(mini-batches containing the) $w$ most recently seen items. Additionally,
improvements to the selection process, such as lower recursion depth, would
directly improve the throughput of our method, especially for small mini-batch
sizes (see \cref{exp}).  Curiously, while using multiple pivots reduces the
average recursion depth in the selection by a factor of around 2.5, the running
time benefit is currently much more limited at around $35\,\%$ of selection
running time (see \cref{exp:weak}).  Preliminary measurements suggest that the
reduced number of MPI collective operations does not translate into an according
reduction in running time.  Careful engineering might be able to improve this.
Additionally, experiments on larger supercomputers could further underline that
centralized algorithms do not scale beyond a certain size.

\subsection*{Acknowledgment}\label{ack}
This work was performed on the supercomputer ForHLR funded by the Ministry of
Science, Research and the Arts Baden-Württemberg and by the Federal Ministry of
Education and Research.

The authors would like to thank Timo Bingmann, Yaroslav Akhremtsev, and Tobias
Theuer for their work on the implementation of the B+ tree.

\bibliographystyle{plainurl}
\bibliography{diss}

\end{document}